\newtheorem{theorem}{Theorem}
\newtheorem{corollary}{Corollary}
\theoremstyle{definition}
\DeclareMathOperator*{\argmin}{arg\,min}
\title{Ballot Length in Instant Runoff Voting\footnote{The extended version of a paper appearing at AAAI '23.}}
\author {
    Kiran Tomlinson,\textsuperscript{\rm 1}
    Johan Ugander,\textsuperscript{\rm 2}
    Jon Kleinberg\textsuperscript{\rm 1}
}
\begin{document}

\maketitle
\begin{abstract}
 Instant runoff voting (IRV) is an increasingly-popular alternative to traditional plurality voting in which voters submit rankings over the candidates rather than single votes. In practice, elections using IRV often restrict the \emph{ballot length}, the number of candidates a voter is allowed to rank on their ballot. We theoretically and empirically analyze how ballot length can influence the outcome of an election, given fixed voter preferences. We show that there exist preference profiles over $k$ candidates such that up to $k-1$ different candidates win at different ballot lengths. We derive exact lower bounds on the number of voters required for such profiles and provide a construction matching the lower bound for unrestricted voter preferences. Additionally, we characterize which sequences of winners are possible over ballot lengths and provide explicit profile constructions achieving any feasible winner sequence.
 We also examine how classic preference restrictions influence our results---for instance, single-peakedness makes $k-1$ different winners impossible but still allows at least $\Omega(\sqrt k)$.
Finally, we analyze a collection of 168 real-world elections, where we truncate rankings to simulate shorter ballots. We find that shorter ballots could have changed the outcome in one quarter of these elections. 
Our results highlight ballot length as a consequential degree of freedom in the design of IRV elections.
\end{abstract}

\begin{figure*}[t]
   \begin{subfigure}[b]{0.49\textwidth}
         \centering
         \includegraphics[width=\textwidth]{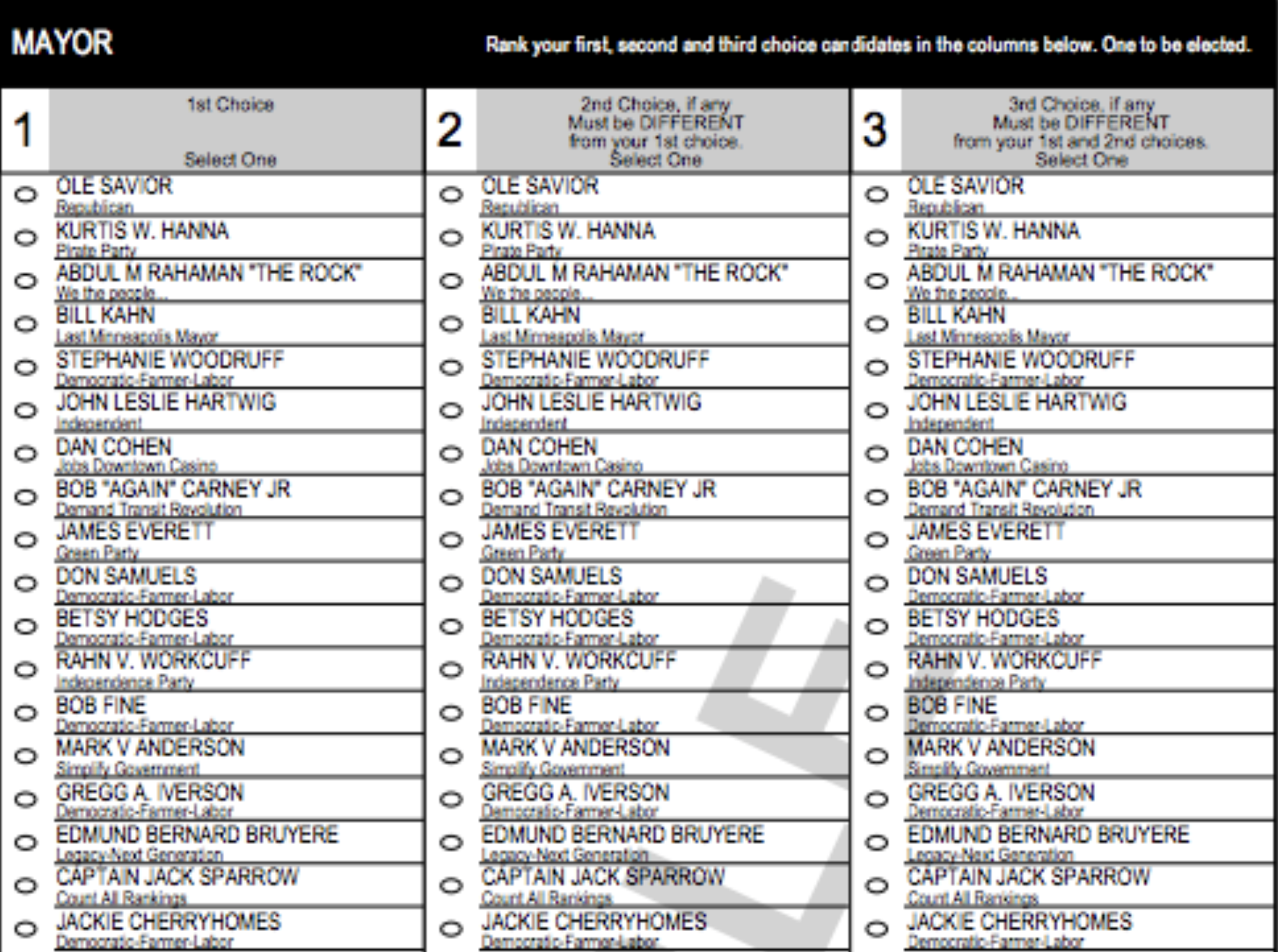}
         \caption{Minneapolis, MN: length 3}
     \end{subfigure}
     \hfill
     \begin{subfigure}[b]{0.49\textwidth}
         \centering
         \includegraphics[width=\textwidth]{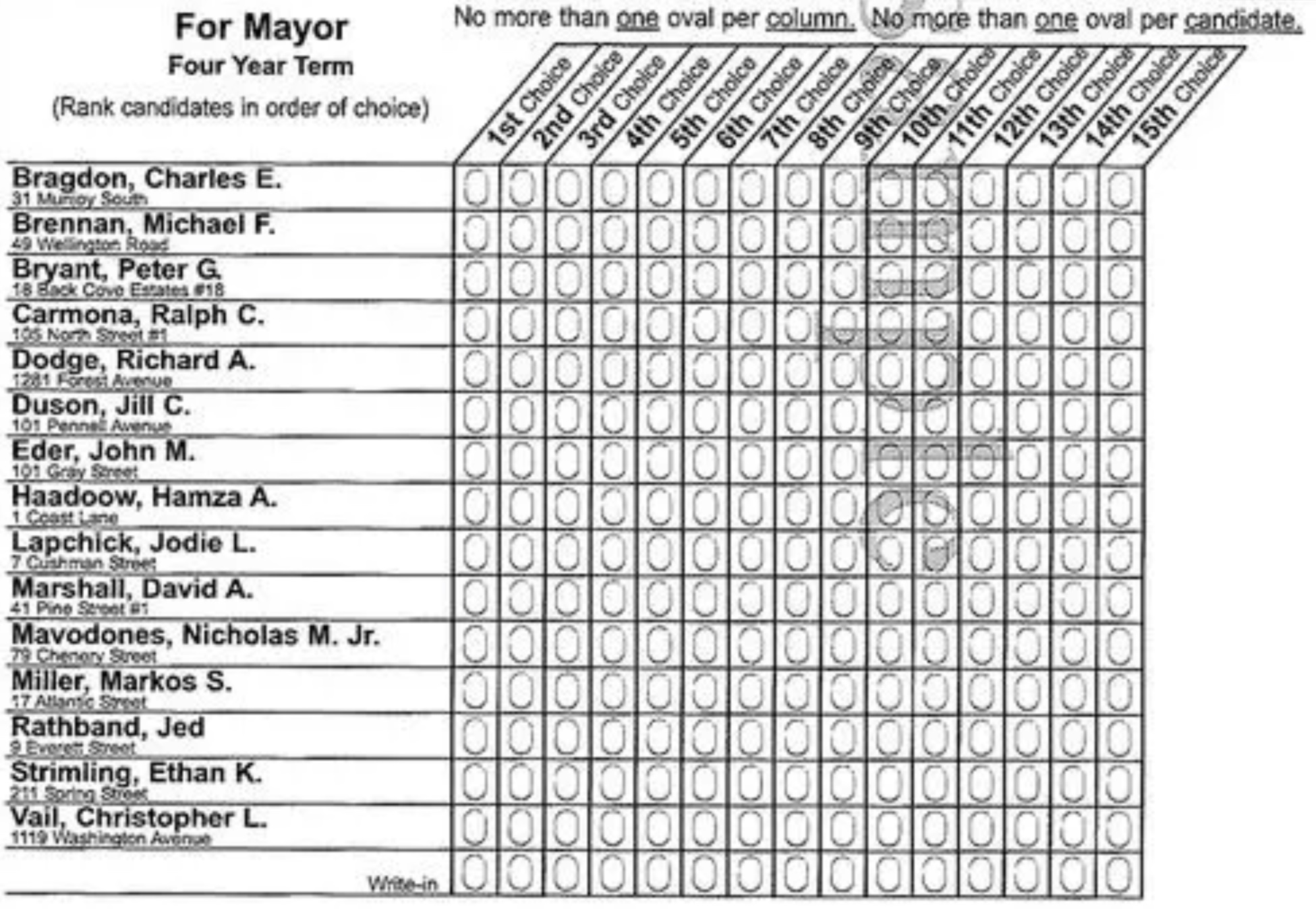}
         \caption{Portland, ME: unlimited length}
     \end{subfigure}\\
     
     \caption{Sample mayoral election ballots from Minneapolis, MN and Portland, ME. Minneapolis ballots allow voters to rank up to three of the candidates, while Portland ballots allow voters to rank all of the candidates.  }\label{fig:ballots}
\end{figure*}

\section{Introduction}
Instant runoff voting (IRV) has grown in popularity over the last two decades as an alternative to plurality voting for governmental and organizational elections. Also referred to as ranked choice voting (RCV), single transferrable vote (STV), alternative vote, preferential voting, or the Hare method, IRV allows voters to submit rankings over the candidates rather than voting for a single option. 
IRV determines a winner from these rankings by repeatedly eliminating the candidate who has the fewest ballots ranking them first; the ballots that listed this eliminated candidate first have their votes reallocated to the next candidate on their list.
This process continues, repeatedly eliminating candidates, until only one is left---the winner.

Proponents of IRV argue that it allows voters to report their full preferences, mitigates vote-splitting when similar candidates run, encourages civility in campaigning, and saves money compared to holding separate runoff elections~\cite{fairvote, lewyn2012two}. Many local elections in the United States use IRV, including in Minneapolis, San Fransisco, Oakland, Santa Fe, and New York City,
 as well as statewide elections in Maine and Alaska. IRV is also used in other countries, including Australia and Ireland. 

However, IRV has vocal opponents who believe it to be too confusing for voters~\cite{langan2004instant,wsjmess}, leading to outright bans on the use of IRV in Florida~\cite{floridaban} and Tennessee~\cite{tennesseeban}. One particular issue critics point to is the complexity of a ballot that asks voters to rank every candidate, especially when the number of candidates is large. One official tasked with running Utah's first IRV election raised this as her primary concern after the election:
\begin{displayquote}[\citealp{swensenquote}, Salt Lake County Clerk]
 My concerns with the current RCV law are that we would recommend the number of rankings be limited to three or five instead of an unlimited number based on the number of candidates. So although you can list as many candidates as file on the ballot, I think it is a bit confusing to voters [...] For instance, in Minneapolis they rank three. In St.\ Paul, they rank five. They don’t usually have them rank as many candidates as there are.
\end{displayquote}
Indeed, many municipalities have different numbers of ranking slots on their IRV ballots, what we call \emph{ballot length}: Oakland uses three, Alaska four, and New York City five. The count goes on: ballot length six would have been mandated by the failed 2019 Ranked Choice Voting Act proposing IRV for US Congressional elections~\cite{rcvact}. In Maine, voters can rank all of the candidates---even if there are 15 of them.
In fact, plurality voting can be viewed as IRV with ballot length one: losing candidates are repeatedly ``eliminated'' (without redistribution) until the candidate with a plurality is declared the winner.

While making ballots shorter does make them simpler, it also strays from a goal of IRV: allowing voters to express their complete preferences over the candidates. Critics of IRV also raise concerns about \emph{ballot exhaustion} during the IRV algorithm, where all candidates ranked by a voter have been eliminated and that vote no longer contributes to subsequent tallies~\cite{burnett2015ballot}.\footnote{In plurality, any vote not cast for the winner is ``exhausted.''} Ballot length is therefore subject to competing desires: shorter ballots are easier to fill out and simpler to print, but less informative about voter preferences. 

Despite the apparent trade-offs involved in ballot length, 
there has been very little investigation of how these trade-offs might work.
As noted above, plurality voting can be seen as IRV with ballot
length one, and so the fact that plurality and IRV can produce 
different outcomes already indicates that ballot
length can have important consequences.
But aside from early work looking at simulations and a few real-world elections \cite{kilgour2020prevalence,ayadi2019single} we
do not have much insight into the consequences of ballot length more generally.
Perhaps, for example, there are underlying structural properties to be
discovered that constrain how many winners are possible as we vary the ballot length.
Or perhaps ``anything goes,'' and if we specify which candidate
we'd like to see win at each possible ballot length, we can construct a fixed
set of rankings that produce each desired winner 
at the corresponding length.

\paragraph{Overview of Results.}
In this paper, we show that the effect of ballot length 
essentially behaves like the latter extreme, 
where almost every sequence of outcomes is possible.
In particular, we prove that 
modulo a simple feasibility constraint, it
is possible to pick any sequence of candidates (with repetitions allowed), and to have this be the sequence of winners
at ballot lengths $1, 2, 3, ... $.
For example, there are voter preferences such that one candidate wins if the election is run with odd ballot length and another wins with even ballot length. We make a central assumption that voters have fixed ideal rankings and report as long a prefix of their ideal ranking as the ballot allows. Given $k$ candidates, we show that up to $k-1$ of them can win as the ballot length varies from $1, \dots, k-1$ and voter preferences remain fixed. Moreover, we establish exact matching lower bounds on the number of voters required to produce $k-1$ distinct winners. 

We also consider how these results are affected if we make
standard modeling assumptions about voters.
If we model voters abstractly as exhibiting 
single-peaked or single-crossing preferences, we prove that $k-1$ distinct winners across ballot lengths cannot be achieved. 
We also consider voters who rank candidates according to a shared
one-dimensional ideological spectrum; since such voters are both
single-peaked and single-crossing, there cannot be $k-1$ distinct winners
in these cases.
We find through simulation that in this one-dimensional case, 
ballot lengths above $k/2$ almost always produce the same winner as full IRV ballots.

Finally, we use data from 168 real-world elections from PrefLib~\cite{mattei2013preflib} (most of them originally conducted using IRV), and we find that 
different winners across ballot lengths is a phenomenon that occurs commonly: in 25\% of the PrefLib elections at least two different candidates win as the ballot length is varied by truncation. However, truly pathological cases with $k-1$ winners appear to be extremely rare: we observe at most three distinct winners across ballot lengths, and that occurs only once in the 168 PrefLib elections. But even with these real-world voter preferences, more than three winners can occur; by resampling ballots in the PrefLib elections, we observe cases with four, five, and even six different winners across ballot lengths. We note that one third of the elections initially used ballot length of at most four, where it is impossible to have more than three different winners across ballot lengths. Our code and data are available at \url{https://github.com/tomlinsonk/irv-ballot-length}.

\section{Related work}
There has been considerable work on what happens when individual voters choose not to rank all the candidates---a practice sometimes called \emph{voluntary truncation}---in contrast with \emph{forced truncation} (i.e., ballot length restrictions)~\cite{kilgour2020prevalence}. In many voting systems including IRV, election outcomes can change dramatically as voters independently choose to rank more or fewer candidates~\cite{saari1988problem}. This matter has been studied from a computational angle as the \emph{possible winners} problem, which asks, given a collection of partial ballots, which candidates could become winners as those ballots are filled out~\cite{konczak2005voting,chevaleyre2010possible, baumeister2012campaigns,xia2011determining,ayadi2019single}. There is  also a wide array of research on how partial ballots can be used for strategic voting and campaigning~\cite{baumeister2012campaigns,narodytska2014computational, menon2017computational,kamwa2022scoring,fishburn1984manipulability}. On the empirical side, voluntary truncation is a concern since it can lead to ballot exhaustion~\cite{burnett2015ballot}. In political science, voluntary truncation is also referred to as \emph{under-voting}~\cite{neely2008whose}. Several studies have asked whether different demographic groups are more likely to under-vote and how this could have a disenfranchising effect~\cite{neely2008whose,coll2021demographic,hoffman2021proportionality}. There has also been research on ``over-voting'' in IRV, which refers to ranking a single candidate in more than one position (e.g., first and second), especially its correlation with underrepresented voting populations~\cite{neely2008whose,neely2015overvoting}. 

 In contrast, we investigate what happens when all voter preferences are truncated as a result of ballot length. That is, we focus on a question of election design rather than on voter choice. In this direction, \Citeauthor{ayadi2019single}~\citeyearpar{ayadi2019single} investigated how often IRV with short ballots produces the full-ballot winner in the Mallows model and in five PrefLib elections. However, all five PrefLib elections they studied produced the full-ballot winner at all ballot lengths---in analyzing  a larger collection of 168 PrefLib elections, we find multiple winners across ballot lengths in 25\% of them. \Citeauthor{ayadi2019single} also examined several other interesting facets of IRV ballot length, including a low-communication IRV protocol (a form of online, per-voter ballot length customization) and the complexity of the possible winners problem under truncated ballots. The issue of ballot length in IRV was also touched on by \Citeauthor{kilgour2020prevalence}~\citeyearpar{kilgour2020prevalence}, who examined its effect in simulation for $k = 4, 5,$ and $6$ candidates, where they found up to $k - 2$ distinct winners across ballot lengths. We prove that in fact $k-1$ winners are possible \emph{for all $k\ge 3$}. Ballot length has been considered in contexts other than IRV---for instance, research on the Boston school choice mechanism found that limiting the number of schools parents could rank to five resulted in undesirable strategic behavior~\cite{abdulkadiroglu2006changing}. There has also been research on ballot length in approval voting from a learning theory angle, seeking to recover a population's preferences efficiently~\cite{garg2019your}.

\section{Preliminaries}

\begin{figure}
  
  \centering
  \begin{tikzpicture}[x=3.6mm,y=4mm]
    \newcommand{\da}{-1}
    \newcommand{\db}{12.5}
    

    \node[align=center] (a) at (3, 5) {\emph{voter count}};
    \node[align=center] (c) at (3, -0.5) {\emph{ballot type}};

    \draw[draw=black,fill=white] (\da+-0.5,-0.5) rectangle (\da+0.5,3.5);
    \node  at (\da+0, 0) {B};
    \node (d) at (\da+0, 1) {C};
    \node at (\da+0, 2) {D};   
    \node at (\da+0, 3) {A};
    \node at (\da+0, 4) {$2$};
    
    \node at (\da+1.5, 3) {A};
    \node at (\da+1.5, 4) {5};
    \draw[draw=black] (\da+1,2.5) rectangle (\da+2,3.5);
 
    \node at (\da+3, 1) {A};
    \node at (\da+3, 2) {D};
    \node at (\da+3, 3) {B};
    \node at (\da+3, 4) {6};
    \draw[draw=black] (\da+2.5,0.5) rectangle (\da+3.5,3.5);
    
    \node at (\da+4.5, 3) {C};
    \node at (\da+4.5, 4) {6};
    \draw[draw=black] (\da+4,2.5) rectangle (\da+5,3.5);
    
    \node at (\da+6, 2) {B};   
    \node at (\da+6, 3) {D};
    \node at (\da+6, 4) {3};
    \draw[draw=black] (\da+5.5,1.5) rectangle (\da+6.5,3.5);
   
    \node at (\da+7.5, 2) {C};   
    \node at (\da+7.5, 3) {D};
    \node at (\da+7.5, 4) {2};
    \draw[draw=black] (\da+7,1.5) rectangle (\da+8,3.5);

    \draw[-stealth] (c) -- (d);

    \draw[-stealth] (8, 2) -- (11, 2) node[midway,above] {$h = 2$};

    \node at (\db+0, 2) {D};   
    \node at (\db+0, 3) {A};
    \node at (\db+0, 4) {2};
    \draw[draw=black] (\db+-0.5,1.5) -- (\db+-0.5, 3.5) -- (\db+0.5,3.5) -- (\db+0.5,1.5);
    \draw[decorate,decoration={zigzag,segment length=0.75mm,amplitude=0.2mm}] (\db+-0.5,1.5) -- (\db+0.5,1.5);
    
    \node at (\db+1.5, 3) {A};
    \node at (\db+1.5, 4) {5};
    \draw[draw=black] (\db+1,2.5) rectangle (\db+2,3.5);
 
    \node at (\db+3, 2) {D};
    \node at (\db+3, 3) {B};
    \node at (\db+3, 4) {6};
    \draw[draw=black] (\db+2.5,1.5) -- (\db+2.5, 3.5) -- (\db+3.5,3.5) -- (\db+3.5,1.5);
        \draw[decorate,decoration={zigzag,segment length=0.75mm,amplitude=0.2mm}] (\db+2.5,1.5) -- (\db+3.5,1.5);
        
    \node at (\db+4.5, 3) {C};
    \node at (\db+4.5, 4) {6};
    \draw[draw=black] (\db+4,2.5) rectangle (\db+5,3.5);
    
    \node at (\db+6, 2) {B};   
    \node at (\db+6, 3) {D};
    \node at (\db+6, 4) {3};
    \draw[draw=black] (\db+5.5,1.5) rectangle (\db+6.5,3.5);
   
    \node at (\db+7.5, 2) {C};   
    \node at (\db+7.5, 3) {D};
    \node at (\db+7.5, 4) {2};
    \draw[draw=black] (\db+7,1.5) rectangle (\db+8,3.5);

  \end{tikzpicture}
  \caption{On the left, an example profile with $k=4$ candidates A, B, C, D and $n= 24$ voters of 6 types with partial ballots. Ballots are listed top-down, with the number of voters of each type above each ballot. On the right, the profile is truncated to ballot length $h=2$.}\label{fig:ex}
\end{figure}
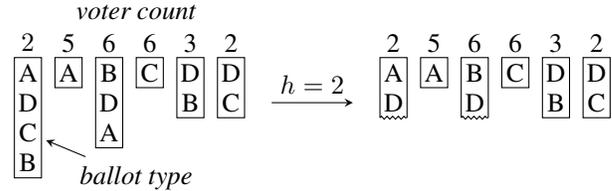

An IRV election consists of $k$ candidates labeled $1, \dots, k$ and $n$ voters. Each voter $j$ has a preference ordering over a subset of the candidates denoted by the ordered subset $\pi_j$, which we refer to as a \emph{ballot}. At any point down the ballot, $\pi_j$ can terminate, at which point the voter is indifferent over the remaining options. If $\pi_j$ includes all candidates, we call it \emph{full}, otherwise we call it \emph{partial}. We call a collection of ballots a \emph{profile}. Unless otherwise specified, a profile may contain partial ballots.\footnote{All 168 elections in the PrefLib data have partial ballots.} If multiple voters have identical ballots, we say they are of the same \emph{type}. Given a profile, IRV proceeds by eliminating the candidate with the fewest ballots ranking them first and removing them from all ballots. Ballots that have all their candidates eliminated are \emph{exhausted}. Eliminations continue until only one candidate remains, who is declared the winner (equivalently, one can terminate when one candidate has the majority of votes from non-exhausted ballots). Ties can be broken as desired (for instance, by coin-flip), although they are unlikely in large elections. 

In many real-world elections, the number of candidates a voter can rank is limited to $h<k$, which we call the \emph{ballot length}.  We assume that if the ballot length is $h$, voters submit the length $h$ prefix $\pi_{j}(1, \dots, h)$ of their ideal ballot $\pi_j$.  Voters who would have submitted a ranking listing $h$ or fewer candidates are unaffected. Thus, we say that ballots are \emph{truncated} to the ballot length $h$. See \Cref{fig:ex} for an example of a profile with partial ballots truncated to $h=2$.
Note that there is no difference between running IRV with ballot length $k$ and $k-1$, since only one candidate remains after the $(k-1)$th elimination.

The main question we focus on is how ballot length affects an election. For instance, how many different candidates can win as the ballot length varies for a fixed profile? In order to address this question, we make some assumptions about the lack of consequential ties, since in trivial cases such as zero voters, any candidate can win depending on tie-breaks. We say that a profile is \emph{consequential-tie-free} if tie-breaks do not affect the winner under any ballot length $h$.  We say it is \emph{elimination-tie-free} if a tie for last place never occurs when running IRV for any ballot length $h$. Finally, we say it is \emph{tie-free} if no two candidates ever have tied vote counts when running IRV at any ballot length $h$. We note that the problem of determining if a given candidate could win under some tie-breaking sequence is known to be NP-complete~\cite{conitzer2009preference}.

\section{Worst-case analysis of ballot truncation}
We say a profile has $c$ \emph{truncation winners} if $c$ different candidates can win depending on the ballot length. Previous simulation work found up to $k-2$ truncation winners for $k = 4, 5,$ and $6$~\cite{kilgour2020prevalence}. One of our main results is that up to $k-1$ truncation winners are possible for any $k$. We note that it is impossible to have all $k$ candidates win under different ballot lengths, since lengths $k$ and $k-1$ behave the same way. 

First, we establish an exact lower bound on the number of voters required in order to achieve $k-1$ truncation winners in consequential-tie-free profiles.
Our voter lower bound is based on the observation that the winner at $h=1$ (the plurality winner) must be eliminated second under ballot lengths $\ge 2$ for $k-1$ truncation winners to occur. In order for the plurality winner to be eliminated second, the first elimination must redistribute enough votes for every other candidate to overtake the plurality winner.

\begin{theorem}\label{thm:tie-bound}
  For any $k > 3$, a consequential-tie-free profile must contain at least $2k^2-2k$ voters in order to produce $k-1$ truncation winners. For $k=3$, the lower bound is $k^2 = 9$. 
\end{theorem}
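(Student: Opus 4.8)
The plan is to use the fact that making a ballot longer cannot change who a voter ranks \emph{first}, so increasing the ballot length only affects the election after the first few eliminations; this pins down the early structure of any profile achieving $k-1$ truncation winners, after which the bound follows from a counting argument. First I would observe that the vector of first-place counts is identical for every ballot length $h\ge 1$, so the candidate eliminated first---call it $a$, the one with fewest first-place votes---is the same at every length, and (since $a$'s ballots get redistributed to their second choices, which survive truncation whenever $h\ge 2$) the identity of the second candidate eliminated and the vote counts at that moment are also the same for all $h\ge 2$. Consequential-tie-freeness forces $a$ to be the \emph{strict} first-place minimizer, since otherwise a tie for the first elimination occurs at $h\ge 2$ and one checks it can change the winner. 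Let $p$ denote the plurality winner (the winner at $h=1$); consequential-tie-freeness similarly makes $p$ the strict first-place maximizer.

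The central reduction is that $p$ must be the \emph{second} candidate eliminated at every $h\ge 2$. Since $a$ is eliminated first at every length it never wins, so the $k-1$ truncation winners are exactly the $k-1$ candidates other than $a$, each winning at exactly one ballot length; in particular $p$ can win only at $h=1$, hence loses at every $h\in\{2,\dots,k-1\}$. But by the previous paragraph the first two eliminations do not depend on $h$ for $h\ge 2$, so if $p$ were not eliminated second it would survive the first two eliminations at \emph{every} such $h$; I would then argue that at $h=2$ essentially all surviving ballots act like plurality ballots for the remaining race and $p$'s strict majority of first-place support among the survivors can never be overtaken, so $p$ would win at $h=2$---a contradiction.

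With $p$ eliminated second, immediately after $a$'s $v_a$ ballots are redistributed $p$ has (strictly, by consequential-tie-freeness) the fewest votes among the $k-1$ survivors. Since $p$ retains all $v_p$ of its first-place votes, every one of the other $k-2$ survivors must be lifted strictly above $v_p$ purely by the redistribution of $a$'s ballots, so $a$ must donate at least $v_p-v_c+1\ge 2$ votes to each such candidate $c$, and these donations total at most $v_a$. Summing these $k-2$ overtaking inequalities (and bookkeeping exhausted ballots and any votes $a$ sends to $p$) gives a bound of the form $n\ge (k-1)v_p+\Omega(k)$; on the other hand $v_p$ is the strict maximum of $k$ first-place counts that sum to $n$, with $a$'s count strictly below a middle candidate's count, which is strictly below $v_p$, forcing $v_p\ge n/k+\Omega(1)$. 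These two inequalities squeeze $v_p$ from opposite sides, and eliminating $v_p$ yields $n=\Omega(k^2)$; tracking the constants carefully gives $n\ge 2k^2-2k$ for $k>3$. The case $k=3$ is handled separately: there is only a single candidate $c$ besides $a$ and $p$, so the summation degenerates to the single inequality $v_c+v_a>v_p$, and combining this with $v_a<v_c<v_p$ and $v_p+v_c+v_a=n$ gives $n\ge 9$.

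I expect the conceptual crux to be the reduction in the second paragraph---cleanly ruling out the scenario in which the plurality winner survives the first two eliminations at all ballot lengths. The quantitatively delicate part is the last step: getting the exact constant rather than merely $\Omega(k^2)$ requires extracting every available unit from consequential-tie-freeness to make the overtaking and ordering inequalities strict, carefully accounting for exhausted ballots and for any votes that $a$ redistributes to $p$ itself, and treating $k=3$ as a separate base case.
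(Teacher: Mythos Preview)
Your outline parallels the paper's---pin down the first two eliminations, show the plurality winner $p$ must go second, then count---but two steps have real gaps.

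The reduction ``$p$ is eliminated second at every $h\ge2$'' is the right conclusion via a faulty route. You argue that if $p$ survived the first two eliminations then at $h=2$ the remaining race is essentially plurality and $p$'s ``strict majority of first-place support among the survivors'' would carry the day. But $p$ has only a strict \emph{plurality}, and at $h=2$ the second-choice votes of $a$ (and then of the second-eliminated candidate) are redistributed before that residual plurality race begins; nothing prevents some $c\ne p$ from being ahead of $p$ at that point and winning while $p$ is eliminated later. The paper's argument is cleaner and avoids any $h=2$ analysis: whoever is eliminated second---call them $q$---is the same at every $h\ge2$ and hence cannot win at any such $h$; since each of the $k-1$ non-$a$ candidates must win at exactly one length, $q$'s only available slot is $h=1$, so $q=p$.

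Your counting, carried out exactly, yields $n\ge 2k^2-3k$, not $2k^2-2k$. Summing the $k-2$ overtaking inequalities gives $v_a\ge 2(k-2)$; the strict chain $v_a<v_c<v_p$ then forces $v_c\ge 2k-3$ and $v_p\ge 2k-2$, so $n\ge 2(k-2)+(k-2)(2k-3)+(2k-2)=2k^2-3k$. Your squeeze $n\ge(k-1)v_p+(k-2)$ against $v_p\ge n/k+1$ produces the same number. The paper closes the remaining gap of $k$ with an extra constraint absent from your sketch: for $k>3$ it argues that the winner at $h=2$ must already be ``unambiguously in the lead'' over the other middle candidates immediately after $a$'s redistribution (before $p$ is eliminated), which forces $v_a\ge 2(k-2)+1$ rather than $2(k-2)$ and then propagates through $v_a<v_c<v_p$ to $2k^2-2k$. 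Without an analogue of this observation your method stalls one step short of the stated constant.
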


\begin{proof}
 Suppose we have a consequential-tie-free profile with $k-1$ truncation winners.  Then $k-1$ of the candidates each have a unique ballot length in $1, \dots, k-1$ at which they win.  Label these candidates $1, \dots, k-1$ according to their winning ballot length. The candidate not in those $k-1$ winners, call them candidate $k$, must have at least 1 fewer first-place vote than any other candidate (otherwise one of the winners could be eliminated first after a tie-break, preventing them from winning at their ballot length). Now consider the winner under ballot length 1, namely candidate 1. In order for candidate 1 to be the unambiguous plurality winner, they must have at least one more vote than every other candidate. Next, consider who is eliminated second. It has to be candidate 1: if any other candidate $i \ne 1$ can be eliminated second, then they will not be able to win at their designated ballot length $h > 1$. In order for candidate 1 to be eliminated second, they must be in unambiguous last place after candidate $k$'s ballots are redistributed. This means at least 2 of those ballots need to go to each of candidates $2, \dots, k-1$ (who are currently trailing candidate 1 by 1 vote). Finally, if $k>3$, the candidate who wins at ballot length 2  (candidate 2) must be unambiguously in the lead over $3, \dots, k-1$ after redistributing $k$'s ballots. Either they had more initial ballots than $3, \dots, k-1$ (but this would require at least one more ballot from candidate $k$ to help those lower candidates overtake 1) or they got a single extra ballot from candidate $k$. To summarize the constraints:
\begin{enumerate}
  \item candidates $1, \dots, k-1$ have at least one more first-place vote than candidate $k$,
  \item candidate $1$ has at least one more first-place vote than any other candidate, and
  \item candidate $k$ has enough first-place votes to redistribute at least two each to $2, \dots, k-1$ (plus at least one more if $k > 3$). 
\end{enumerate}
For $k>3$, the total number of ballots ranking $k$ first is thus at least $2(k-2) + 1$, by constraint 3. Each of candidates $2, \dots, k-1$ must then have at least $2(k-2) + 2$ first-place ballots by constraint 1. Finally, candidate $1$  must have at least $2(k-2)+3$ first-place ballots by constraint 2. The minimum number of ballots is thus $2(k-2) + 1 + (k-2)(2(k-2) + 2) + 2(k-2)+3 = 2k^2 - 2k$. 

For $k=3$, constraint 3 only requires $2(k-2) = 2$ first-place votes for candidate 3. Candidates 2 and 1 must then have 3 and 4 first-place votes by constraints 1 and 2, for a total of $2 + 3 + 4 = 9 = k^2$.

\end{proof}

Our main theoretical result is a construction matching this lower bound, showing that $k-1$ truncation winners can occur for any $k \ge 3$. Our construction can not only produce $k-1$ truncation winners, but \emph{any} sequence of winners over ballot lengths $1, \dots, k-1$, provided that a candidate has not yet been eliminated.

\begin{theorem}\label{thm:tie-construction}
Let there be $k> 3$ candidates, labelled $1, \dots, k$ in their full-ballot IRV elimination order.
 Fix any sequence of candidates $w_1, \dots, w_{k-1}$ such that $w_h \in \{h+1, \dots, k\}$ for all $h\in [k-1]$. There exists a consequential-tie-free profile with $2k^2 - 2k$ partial ballots whose sequence of truncated IRV winners from $h=1, \dots, k-1$ is $w_1, \dots, w_{k-1}$. For $k=3$, such a profile exists with $9$ ballots. Any sequence where $w_h \le h$ for some $h \in [k-1]$ is impossible to realize as the sequence of truncated IRV winners for any consequential-tie-free profile. 
\end{theorem}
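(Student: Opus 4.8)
The plan is to prove the two halves of the statement separately, starting with the impossibility claim, because it also dictates the shape of the construction.

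\emph{Impossibility.} The first step is a \emph{prefix-agreement} lemma: for any profile, IRV run at ballot length $h$ performs the same first $h$ eliminations — and computes the same tallies in rounds $1,\dots,h$ — as IRV run on full ballots. I would prove this by induction on the round $j\le h$. Round $1$ tallies are first-place counts, which truncation to any length never changes. For the step, assume rounds $1,\dots,j-1$ have agreed, with $j\le h$; then after $j-1$ eliminations every ballot's active candidate sits among its first $j\le h$ positions, and truncation to length $h$ leaves a ballot's first $h$ entries (and its exhaustion status at those depths) untouched, so the round-$j$ tallies under length $h$ and under full ballots coincide and the same candidate is eliminated under the same tie-break. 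Fixing a tie-breaking rule under which the full-ballot elimination order is $1,\dots,k$ (the labelling in the statement), the lemma says ballot length $h$ eliminates $1,\dots,h$ in its first $h$ rounds, so its winner lies in the surviving set $\{h+1,\dots,k\}$; since the profile is consequential-tie-free this winner is tie-break-independent and equals $w_h$. Hence $w_h\le h$ is impossible for any $h$.

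\emph{Construction.} For the positive direction I would write down an explicit profile on $2k^2-2k$ partial ballots ($9$ when $k=3$) and verify it, the design being layered and guided by the lower-bound proof of \Cref{thm:tie-bound}. There is a ``skeleton'' of ballots whose role is to make full-ballot IRV eliminate $1,2,\dots,k-1$ in order — this makes the labelling hypothesis hold, pins the first $h$ rounds at ballot length $h$ to eliminate $1,\dots,h$, and already seats $w_{k-1}=k$ — together with, for each $h\le k-2$, a block of ballots that is \emph{inert at every shorter ballot length} (its first $h'$ entries lie in $\{1,\dots,h'\}$ for $h'<h$, so those ballots exhaust as soon as $1,\dots,h'$ are removed) but that at ballot length exactly $h$ redirects its weight onto $w_h$. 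The counts are set so that once candidates $1,\dots,h$ have been removed at length $h$, candidate $w_h$ holds a strict majority of the still-active ballots and therefore wins outright; and the arithmetic of ``candidate $1$ has the fewest first-place votes'', ``two redistributed votes to each trailing candidate'', and ``one more vote apiece to make $w_1$ and $w_2$ win'' is precisely the arithmetic of the \Cref{thm:tie-bound} lower bound, which forces the total to be $2k^2-2k$ (and $k^2=9$ for $k=3$, where the ``one more'' term collapses).

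The main obstacle is the interaction of \emph{simultaneity} with the \emph{exact ballot budget}: each ballot must play different roles at different lengths (silent below its layer, decisive at its layer, harmless above it), the full-ballot order must remain $1,\dots,k$ for every admissible choice of $w_1,\dots,w_{k-1}$, and there is no slack above $2k^2-2k$ to absorb a slip. The delicate verification is that the layers do not collide — that a layer-$h$ ballot, once we pass to a longer ballot length $h'$, never disturbs the rounds $h'+1, h'+2,\dots$ whose behaviour full-ballot IRV already fixes, and that the majority earmarked for $w_h$ at level $h$ is not eroded by ballots from other layers leaking into the tally among $\{h+1,\dots,k\}$; this is tightest when consecutive targets are close, e.g.\ $w_h=h+1$. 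Checking the boundary lengths $h=1,2$, doing the $k=3$ count directly, and re-confirming that the finished profile is consequential-tie-free complete the plan.
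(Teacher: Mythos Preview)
Your impossibility half is correct and is exactly what the paper does (the paper states it in one sentence; your inductive prefix-agreement lemma simply supplies the omitted detail).

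Your construction plan, however, has a genuine gap. The condition ``first $h'$ entries lie in $\{1,\dots,h'\}$ for $h'<h$'' taken at $h'=1$ forces \emph{every} layer-$h$ ballot (for $h\ge 2$) to list candidate~$1$ first. All of this layer weight therefore counts toward candidate~$1$'s first-place tally in round~1, at every ballot length. For candidate~$1$ still to be eliminated first, each other candidate's skeleton count must exceed the entire layer mass; and once $1$ is gone the layer ballots for $h\ge 3$ all move to candidate~$2$, forcing the remaining skeletons even higher, and so on down the line. This cascade pushes the total well beyond $2k^2-2k$---or, if you keep the layers tiny to stay within budget, they cannot supply the swing you need, let alone a ``strict majority'' (the skeleton ballots for $h+1,\dots,k$ remain active after round~$h$ and each already outweighs the whole layer budget). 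Note that the \Cref{thm:tie-bound} arithmetic you invoke gives the first-eliminated candidate only $2(k-2)+1$ first-place votes, which is incompatible with routing all layer weight through that candidate.

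The paper's construction avoids this by indexing ballots by \emph{first-ranked candidate} rather than by target ballot length: the block $S_i$ of ballots listing $i$ first fills positions $2,\dots,i$ with the already-eliminated candidates $1,\dots,i-1$, and chooses position $i+1$ so that eliminating $i$ (in round $i$) pushes $w_{i+1}$ into the lead by exactly one vote and drops $i+1$ into last. There is no majority anywhere; the invariant is ``$w_\ell$ leads by one, $\ell$ trails, everyone else tied,'' and at ballot length $h$ all surviving ballots are effectively single-candidate after round~$h$, so the order freezes with $w_h$ on top. Because first-place counts are kept within two of one another across all $k$ candidates, the total lands exactly at $2k^2-2k$.
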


\begin{proof}
  First, if we have a sequence with $w_h \le h$ for some $h$, then this means the winner at ballot length $h$ is eliminated $h$th or sooner under ballot lengths $\ge h$. This is impossible, since they would be eliminated before they win at length $h$.

  Now suppose we have some valid sequence $w_1, \dots, w_{k-1}$ such that $w_h \in \{h+1, \dots, k\}$ for $h \in [k-1]$. First, assign $2(k-2) + 1$ ballots to each candidate listing them first. Give candidate $w_1$ an extra 2 ballots and the other candidates (except candidate $1$) an extra $1$ ballot each. This is a total of $2k(k-2) + k + (k-2) + 2 = 2k^2 - 2k$ ballots. We now fill out the ballots initially assigned to each candidate, using $S_i$ to denote the set of ballots ranking $i$ first.
  
Except for $i=1$, all ballots in $S_i$ rank candidates $1, \dots, i-1$ in positions $2, \dots, i$. For all $i$, two ballots in $S_i$ rank $\ell$ in position $i+1$ for each $\ell = i+2, \dots, k$ except $w_i$. If $w_{i} \ne i+1$, one ballot in $S_i$ ranks $w_i$ in position $i+1$. Finally, one extra ballot in $S_i$ ranks $w_{i+1}$ in position $i+1$. This requires at most $2(k-2)+1$ ballots, which is covered by the $\ge 2(k-2)+1$ ballots in $S_i$. All ballots in $S_i$ then terminate after their last specified entry. Notice that when $i$ is eliminated, the effect of their redistributed votes is to put the new winner $w_{i+1}$ in the lead and the new loser $i+1$ in last, assuming the last winner $w_i$ was in the lead by a single vote after $i$ is eliminated.

  We now show that if ballots are truncated to length $h < k$, then candidate $w_h$ wins under IRV. First, if we truncate ballots to length $1$, candidate $w_1$ wins: they have 2 more first place votes than candidate $1$ and 1 more than every other candidate. Thus, candidate $1$ will be eliminated (with no redistribution due to the length-1 ballots), followed by the others in some order based on tie-breaking, making candidate 1 win.
  
     Now suppose we truncate to length $h$ ($2 \le h < k$). Candidate $1$ is eliminated first and their second place votes cause candidates $3, \dots, k$ to overtake candidate $2$, with candidate $w_2$ taking the lead by 1 vote. If $h = 2$, then all remaining ballots only have one candidate listed (since the second place votes for ballots assigned to candidate $\ell > 1$ are all for candidate $1$, who is eliminated). Thus candidate $w_2$ wins after eliminating candidate $2$ and then $3, \dots, k \setminus w_2$ in some order. For $h> 2$, we'll prove inductively that for $2 \le \ell < h$, the $\ell$th candidate eliminated is candidate $\ell$, which causes candidate $w_{\ell+1}$ to take the lead by one vote and candidate $\ell+1$ drop to last place by one vote. 
       
     \underline{Base case} ($\ell = 2$): As we saw, the 2nd candidate eliminated is candidate 2. Since $h > 2$, ballots assigned to candidate 2 are not yet exhausted: two go to each of candidates $4, \dots, k$ (except $w_2$); $w_2$ gets one if $w_2 \ne 3$ and zero otherwise; and $w_3$ gets one extra ballot. Since candidate $w_2$ was only in the lead by one vote, this causes the new leader to be candidate $w_3$ and candidate 3 to drop to last place, as claimed. 
     
     \underline{Inductive case} ($2 < \ell < h$): by inductive hypothesis, candidates $2, \dots, \ell -1$ have been eliminated (plus candidate $1$, the first to go), candidate $w_{\ell}$ is currently in the lead, and candidate $\ell$ is in last place. Thus, candidate $\ell$ is the $\ell$th to be eliminated. By construction, the candidates ranked in positions $2, \dots, \ell$ on the ballots initially assigned to $\ell$ (namely, candidates $1, \dots, \ell - 1$) have been eliminated. Additionally, all ballots that were redistributed to $\ell$ are now exhausted. Since $\ell < h$, there are still remaining places on the truncated ballot. Ballots currently assigned to $\ell$ are distributed as follows: two go to each of $\ell+1, \dots, k$ (except $w_\ell$); $w_\ell$ gets one if $w_\ell \ne \ell + 1$ and zero otherwise; and $w_{\ell+1}$ gets one extra ballot. This causes candidate $w_{\ell+1}$ to take the lead by one vote and candidate $\ell$ to drop to last place behind $\ell +2, \dots, k-1$, as claimed.
     
  Once candidate $w_h$ is in the lead, candidates $1, \dots, h-1$ have been eliminated, and candidate $h$ is in last place, all the ballots only list the candidate to which they are currently assigned (since the candidates ranked up to position $h$ on their ballots have been eliminated). Thus, $h$ will be eliminated, followed by $h+1, \dots, k$ (except $w_h$) in some order, making the winner candidate $w_h$, as desired.
\end{proof}

The idea behind the construction is to maintain a tie for second place among all candidates but two: the candidate about to be eliminated, in last, and the candidate next in the winner sequence, in first. Each elimination redistributes ballots to move the next candidates into first and last place. By carefully designing ballots, they become exhausted at just the right moment to freeze the order once we reach step $h$ of IRV, causing the candidate currently in first to win. The example in \Cref{fig:ex} uses this construction for $k=4$ to achieve different winners at ballot lengths $1, 2, 3$ (namely, A, B, C). Note that the full-ballot elimination order labeling of candidates A, B, C, D is 2, 3, 4, 1, which makes the truncation winner sequence 2, 3, 4 feasible. In contrast, the sequence 2, 2, 4 would not be feasible since the candidate eliminated second under full ballots cannot win at ballot length 2. Intuitively, a winner sequence with elimination order labeling is feasible if it is element-wise at least $2, 3, \dots, k$.

\subsection{Restrictions on profiles}\label{sec:profile-restrictions}
Since IRV can behave very erratically across ballot lengths for general profiles, we might hope that imposing restrictions on the space of profiles makes IRV more well-behaved. We consider three classic profile restrictions from voting theory, single-peaked~\cite{black1948rationale,arrow1951social}, single-crossing~\cite{gans1996majority}, and 1-Euclidean preferences (see~\cite{elkind2022restrictions} for a survey of preference restrictions).  
 A profile is \emph{single-peaked} if there exists an order $<$ over the candidates such that, for every ballot $b$ ranking $i$ first, if $j < k < i$ or $i < k < j$, then $j$ is not ranked above $k$ in $b$.  A profile is \emph{single-crossing} if there exists an ordering $L$ of the ballots such that for every ordered pair of candidates $(i, j)$, the set of ballots ranking $i$ above $j$ forms an interval of $L$. Finally, a profile is \emph{1-Euclidean} if there exist embeddings of the voters and candidates in $[0, 1]$ such that if voter $b$ is closer to candidate $i$ than to candidate $j$, then voter $b$ ranks $i$ above $j$.

Intuitively, single-peaked profiles arise when there is a political axis arranging candidates from left to right and voters prefer candidates closer to their ideal point on the axis (each voter can have their own ideal point). Single-crossing preferences arise when voters are arranged on an ideological axis and each candidate is most appealing to voters at a certain point on this axis. While the definitions appear similar, neither condition implies the other. 1-Euclidean profiles are both single-peaked and single-crossing---but there are profiles that are both single-peaked and single-crossing, but not 1-Euclidean~\cite{elkind2014characterization}. 

In contrast to general profiles, where $k-1$ truncation winners can occur, we show that such cases are impossible under either single-peaked or single-crossing preferences (and therefore 1-Euclidean profiles).

\begin{theorem}\label{thm:single-peaked-bound}
  With $k \ge 5$ candidates, no consequential-tie-free single-peaked profile has $k-1$ truncation winners.  
\end{theorem}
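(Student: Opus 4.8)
The plan is to combine the structural constraints already extracted in the proof of Theorem~\ref{thm:tie-bound} with the geometry of single-peaked profiles. Suppose, for contradiction, that a consequential-tie-free single-peaked profile has $k-1$ truncation winners, and label the candidates so that candidate $h$ wins at ballot length $h$ for $h \in [k-1]$ and candidate $k$ never wins. As established in the proof of Theorem~\ref{thm:tie-bound}: candidate $k$ has strictly the fewest first-place votes and is therefore the first candidate eliminated at every ballot length; candidate $1$ is the strict plurality winner and must be eliminated second at every ballot length $h \ge 2$; and, consequently, after candidate $k$'s ballots are redistributed, every one of candidates $2, \dots, k-1$ must strictly overtake candidate $1$, who was strictly ahead of all of them to begin with. (Any ballot length $h\ge 2$ suffices here, since removing $k$ from the length-$h$ prefix of a ballot exposes that ballot's second choice regardless of $h\ge 2$.)

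The key observation is that single-peakedness severely limits how many candidates can benefit from $k$'s elimination. Let $<$ be a single-peaked axis for the profile. Consider any ballot ranking candidate $k$ first, i.e., with \emph{peak} $k$. By single-peakedness the candidate it ranks second must be one of the at most two candidates adjacent to $k$ on the axis, because every candidate farther from $k$ on a given side is ranked below that side's immediate neighbor. Truncating to length $h \ge 2$ keeps the top two entries of such a ballot, so when $k$ is eliminated first---while all other candidates are still present---the new top choice of this ballot (if it is not exhausted) is exactly that axis-neighbor of $k$. Hence all votes redistributed away from $k$ flow to the at most two axis-neighbors of $k$, so at most two candidates gain votes in the first elimination.

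Now count. For $k \ge 5$ the set $\{2, \dots, k-1\}$ has size $k - 2 \ge 3$, but at most two of its members can gain votes from $k$'s redistribution, so at least one candidate $i^* \in \{2, \dots, k-1\}$ keeps its original first-place count. That count was strictly below candidate $1$'s original count, and candidate $1$'s count can only have stayed the same or increased (it changes at all only if candidate $1$ is itself an axis-neighbor of $k$), so candidate $i^*$ still trails candidate $1$ after the redistribution. Thus candidate $1$ is not in (unambiguous) last place, contradicting the requirement that candidate $1$ be eliminated second; this contradiction proves the theorem.

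I expect the main obstacle to be making the single-peaked step fully rigorous: one must argue carefully that, after truncation, removing the first-eliminated candidate $k$ really does expose one of $k$'s two axis-neighbors as the new top choice. This rests on three facts---the second choice on any ballot peaked at $k$ is an axis-neighbor of $k$, truncating to any length $\ge 2$ preserves both of the top two entries, and all candidates other than $k$ are still present at the first elimination---while ballots that become exhausted when $k$ is removed only make the counting bound easier. Everything else is bookkeeping with the constraints inherited from Theorem~\ref{thm:tie-bound}, including checking that the argument genuinely needs $k \ge 5$ (for $k=4$, candidates $2$ and $3$ can both be axis-neighbors of $k$, so the obstruction disappears).
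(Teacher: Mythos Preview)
Your proposal is correct and follows essentially the same route as the paper's proof: both arguments reduce to showing that the plurality winner must be eliminated second (hence overtaken by all of candidates $2,\dots,k-1$ after the first elimination), and then invoke single-peakedness to observe that ballots peaked at the first-eliminated candidate can only list one of its two axis-neighbors in second place, so at most two candidates can gain votes---too few when $k\ge 5$. Your write-up is more explicit than the paper's (you handle the case where candidate~$1$ is itself an axis-neighbor of~$k$, and you spell out why truncation to $h\ge 2$ preserves the relevant second choice), but the underlying idea is identical.
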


\begin{proof}
Suppose for a contradiction that a single-peaked profile has $k-1$ truncation winners ($k \ge 5$). We know the candidate eliminated first cannot win under any ballot length. In order for the candidate eliminated second ($h\ge 2$) to win at some ballot length, it must be at $h=1$---i.e., the plurality winner must be eliminated second under $h\ge 2$. Thus, they must be overtaken by at least three candidates (for $k\ge 5$) when the first eliminated candidate $X$'s ballots are redistributed. But the second place on ballots listing $X$ first can only be the candidate to the left or right of $X$ in the single-peaked ordering, making this impossible.
\end{proof}

\begin{theorem}\label{thm:single-crossing-bound}

  With $k \ge 5$ candidates, no consequential-tie-free single-crossing profile can result in $k-1$ truncation winners.  
\end{theorem}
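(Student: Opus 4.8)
The plan is to closely parallel the argument of \Cref{thm:single-peaked-bound}, only replacing its single-peaked structural fact by a single-crossing analogue. First I would reuse the setup already established for \Cref{thm:tie-bound} and \Cref{thm:single-peaked-bound}: if a consequential-tie-free profile has $k-1$ truncation winners, then some candidate $X$ is eliminated first at every ballot length, the unique plurality winner $Y$ (who wins at $h=1$) must be eliminated second at every $h\ge 2$, and therefore, when $X$'s first-place ballots are redistributed after $X$ is eliminated, each of the other $k-2$ candidates must reach at least $Y$'s new tally and so receives at least one of $X$'s ballots. Hence at least $k-2\ge 3$ distinct candidates appear as the second-ranked candidate across the ballots that rank $X$ first. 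Along the way one notes that $X$ has a positive plurality score (otherwise $X$'s elimination redistributes nothing, so the strict plurality winner $Y$ cannot be eliminated second), and since $X$'s plurality score is the smallest, every candidate then has a positive plurality score.

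The heart of the proof is a single-crossing structural claim: in a single-crossing profile in which every candidate has a positive plurality score, at most two distinct candidates can appear as the second-ranked candidate across the ballots that rank a fixed candidate $X$ first. To prove this, fix a single-crossing order $L$ of the ballots. For each candidate $c$, the set $B_c$ of ballots ranking $c$ first is an interval of $L$, being $\bigcap_{j\ne c}\{b : b \text{ ranks } c \text{ above } j\}$, an intersection of intervals. These blocks are nonempty and pairwise disjoint, hence linearly ordered along $L$. Now suppose $d$ is the second choice of some ballot $w\in B_X$, and let $d'$ be any candidate whose block $B_{d'}$ lies strictly between $B_d$ and $B_X$ in $L$ (the mirror case where these blocks sit on the other side of $B_X$ is symmetric); pick $b\in B_d$ and $b'\in B_{d'}$, so that $b<b'<w$ in $L$. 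Then $b$ ranks $d$ first and $w$ ranks $d$ second --- hence above everything but $X$ --- so both rank $d$ above $d'$, whereas $b'$ ranks $d'$ first, so it ranks $d'$ above $d$; thus the set of ballots ranking $d$ above $d'$ contains $b$ and $w$ but not $b'$ and so is not an interval of $L$, a contradiction. Therefore $B_d$ must be the block immediately adjacent to $B_X$ in $L$ on one side or the other, and there are at most two such blocks, so at most two distinct second choices occur.

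Combining the two parts gives $k-2\le 2$, i.e.\ $k\le 4$, contradicting $k\ge 5$. I expect the structural claim to be the main obstacle --- in particular the observation that a candidate's first-place ballots form an interval of the single-crossing order, and the interleaving step, where one must be careful that a ballot ranking $d$ second genuinely ranks $d$ above every other non-top candidate, and that partial ballots behave correctly under the convention that a ranked candidate is ranked above any unranked one (under which all the comparisons above hold and ``ranks $c$ first'' still means ``ranks $c$ above $j$ for every $j\ne c$''). The ties permitted by consequential-tie-freeness are handled exactly as in the earlier proofs: it forces a unique strict plurality winner and the ``$X$ has positive, minimum plurality score'' facts, and phrasing the requirement as ``each remaining candidate reaches at least $Y$'s new tally'' is correct whether or not $Y$ is sent out by a tie-break.
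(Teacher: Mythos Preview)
Your proof is correct and takes essentially the same approach as the paper: both reduce to showing that in a single-crossing profile the first-eliminated candidate $X$'s ballots can transfer to at most two distinct second-choice candidates, contradicting the need for $k-2\ge 3$ recipients. Your version makes the interval structure of the first-place blocks $B_c$ along $L$ explicit and argues that any second choice must occupy a block adjacent to $B_X$, whereas the paper orders three putative second choices $A,B,C$ by first appearance among the $X$-first ballots and shows the middle one $B$ can have no first-place votes; the underlying single-crossing contradiction is the same, and your presentation is if anything a bit more explicit about why the $X$-first ballots themselves form an interval (which the paper uses implicitly).
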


\begin{proof}
As in the proof of \Cref{thm:single-peaked-bound}, we'll show that the first candidate eliminated, $X$, can only redistribute ballots to two candidates. Suppose for a contradiction that they redistribute ballots to at least three candidates. Call these candidates $A$, $B$, and $C$ in the order in which they first appear as second choices in the ballots ranking $X$ first   in the single-crossing order $L$. By the single-crossing property, all ballots to the left of ballots starting $X, A$ must rank $A$ above $B$, since a ballot to its right ranks $B$ above $A$, namely those starting $X, B$. Moreover, all ballots to the right of ballots starting $X, C$ must rank $C$ above $B$ by symmetric reasoning. But this means $B$ cannot have any ballots ranking them first, contradicting that $X$ (who does have ballots ranking them first) is the first eliminated. See below for a visual depiction of this argument:
\begin{center}
  \begin{tikzpicture}
    \draw[stealth-stealth] (0, 0) -- (6, 0);
    \node (a) at (2, 1) {
      $\begin{bmatrix}
        X\\
        A\\
        \vdots
      \end{bmatrix}$};
    \node (b) at (3, 1) {
      $\begin{bmatrix}
        X\\
        B\\
        \vdots
      \end{bmatrix}$};
    \node (b) at (4, 1) {
      $\begin{bmatrix}
        X\\
        C\\
        \vdots
      \end{bmatrix}$};
      
      \draw[-stealth] (4, -1) -- (6, -1);
      \draw[stealth-] (0, -1) -- (2, -1);
      
      \draw[|-|] (1.7, -.3) -- (4.3, -.3);
      \node at (3, -.6) {$X$ ranked over $B$};

      \node at (1, -1.3) {$A$ ranked over $B$};
      \node at (5, -1.3) {$C$ ranked over $B$};

      \node at (5.8, -0.3) {$L$};

  \end{tikzpicture}
\end{center}
\end{proof}

Although the upper bound on truncation winners is strictly lower for single-peaked profiles than for general profiles, the number of achievable truncation winners still grows with $k$. In particular, we can show that $\Omega(\sqrt{k})$ truncation winners are possible in a consequential-tie-free single-peaked profile with $\Theta(k)$ voters. 
\begin{theorem}\label{thm:single-peaked-construction}
  With $k = \kappa(\kappa+1)/2$ candidates ($\kappa\ge 3$), there is a single-peaked consequential-tie-free profile with $3\kappa(\kappa+1)/2$ partial ballots that results in $\kappa$ distinct truncation winners. 
\end{theorem}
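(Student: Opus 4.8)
The plan is to generalize the single-dimension construction of \Cref{thm:tie-construction} by packing $\kappa$ independent ``blocks'' of candidates along a single-peaked axis, where block $t$ (of size $t$, so the blocks have sizes $1, 2, \dots, \kappa$, summing to $k$) contributes one potential truncation winner. The key realization is that single-peakedness forbids the first-eliminated candidate from redistributing to three or more candidates (as shown in \Cref{thm:single-peaked-bound}), so we cannot cascade the winner through a long chain at consecutive ballot lengths. Instead, we must arrange for the winner to ``jump'' forward by larger steps: the winner at length $h$ should be eliminated only at a much later stage, leaving room for a new block to activate. Concretely, I would lay out the candidates on the axis so that within each block the local structure mimics the spoiler-to-winner gadget of \Cref{thm:tie-construction} (which is inherently two-sided and hence single-peaked within a block), and place the blocks far enough apart on the axis that cross-block second choices never occur.

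First I would fix the axis order and assign first-place vote counts so that, at ballot length $1$, the designated winner of the last (largest) block wins the plurality; decreasing the ballot length peels off blocks from the ``active'' end one at a time. The arithmetic is chosen so that the profile stays consequential-tie-free: within a block we reuse the exact margin-of-one bookkeeping from \Cref{thm:tie-construction}, and across blocks we give the intended leader of each earlier-activating block a strict lead. Second, I would verify single-peakedness directly from the axis order: every ballot ranks a candidate first and then can only descend to axis-neighbors, which the block-local gadget respects because within a block the construction already only uses left/right neighbors relative to the block's peak, and across blocks the ballots terminate before leaving their block. Third, I would run the IRV argument block by block: eliminating all candidates in blocks $1, \dots, t-1$ (in the appropriate order) promotes the intended winner of block $t$ to the lead, and truncating to the ballot length that ``freezes'' the election mid-block-$t$ locks in that winner exactly as in \Cref{thm:tie-construction}. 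Counting, there are $\kappa$ blocks of total size $k = \kappa(\kappa+1)/2$, each candidate gets $O(1)$ ballots, and a careful tally gives the $3\kappa(\kappa+1)/2 = 3k$ bound.

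The main obstacle I expect is making the blocks genuinely independent while keeping the vote counts tight enough to hit the claimed $3\kappa(\kappa+1)/2$ ballot bound. Unlike \Cref{thm:tie-construction}, where a single candidate $k$'s redistributed ballots drive every transition, here each block needs its own ``engine'' of ballots to move its local winner into the lead at the right moment, and these engines must not perturb the delicate ties in other blocks; ensuring the redistributions from an eliminated block land exactly where needed (and nowhere else) without spending extra ballots is the delicate part. A secondary subtlety is the boundary between the ``frozen'' prefix of the ballot and the part that is truncated away: I must check that at ballot length $h$, every ballot still in play has already run out of listed candidates precisely when block $t$ becomes active, so no stray redistribution revives an earlier-eliminated candidate or disturbs the final tally. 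I would handle both by an explicit per-block ballot schedule analogous to the $S_i$ sets in \Cref{thm:tie-construction}, with an inductive invariant asserting that after processing blocks $1, \dots, t-1$ the surviving candidates are exactly blocks $t, \dots, \kappa$ with block $t$'s intended winner leading by one vote.
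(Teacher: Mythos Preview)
Your proposal has a genuine gap: you never specify the mechanism by which ballot length $h$ selects which of the $\kappa$ candidates wins, and the ``generalize \Cref{thm:tie-construction}'' idea does not supply one. In \Cref{thm:tie-construction}, each elimination fans votes out to \emph{all} remaining candidates simultaneously; single-peakedness limits redistribution to axis-neighbors, so a block of size $t$ running a \Cref{thm:tie-construction}-style cascade would need $\Theta(t^2)$ ballots, contradicting your claim of $O(1)$ ballots per candidate and the target $3k$ total. Even granting the ballots, such a gadget produces at most one truncation winner per block, and you give no reason why block $t$'s winner should prevail at length $t$ rather than at some other length. Your phrase ``decreasing the ballot length peels off blocks from the active end'' is also backward: shorter ballots mean \emph{earlier} exhaustion, not more blocks processed before freezing.

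The paper's construction is both simpler and mechanically different. There are $\kappa$ designated winners $1,\dots,\kappa$, each with $\kappa+1$ length-one ballots (winner $1$ gets one extra). Between winners $i-1$ and $i$ on the axis sit $i-1$ filler candidates $f^i_1,\dots,f^i_{i-1}$; only $f^i_1$ receives any first-place votes, namely $i$ copies of the length-$i$ ballot $(f^i_1,f^i_2,\dots,f^i_{i-1},i)$, which walks rightward along the axis (hence single-peaked) and terminates at winner $i$. At ballot length $h$ all fillers are eliminated first; the chain for winner $i$ delivers its $i$ votes to $i$ only if $i\le h$, since otherwise truncation cuts the ballot before position $i$. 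After the fillers are gone, winner $i$ has $\kappa+1+i$ votes for $i\le h$ and $\kappa+1$ votes for $i>h$, so winner $h$ leads and wins. The idea you are missing is that \emph{chain length}, not block-local cascading, is what ties ballot length to the identity of the winner; the quadratic candidate count arises because winner $i$ needs a length-$i$ filler chain.
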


\begin{proof}
  Call candidates $1, \dots, \kappa$ the \emph{winners}. Each winner $i>1$ has $i - 1$ \emph{filler} candidates $f^{i}_1, \dots, f^{i}_{i-1}$ associated with it. The single-peaked axis has winners in the order $1, \dots, \kappa$, with $i$'s fillers between $i$ and $i-1$. That is, the full axis is $1, f^{2}_1, 2, f^{3}_1, f^3_2, 3, f^{4}_1, f^4_2, f^4_3, 4, \dots, f^\kappa_{\kappa-1}, \kappa$. We will fill out ballots so that $i$ wins at ballot length $i$, while maintaining single-peakedness.
  
  Every winner has $\kappa+1$ ballots listing them first and winner $1$ has an additional single ballot. These ballots then terminate. Each candidate's first filler $f^i_1$ has $i$ ballots that list candidates $f^{i}_1, \dots, f^i_{i-1}, i$ in positions $1, \dots, i$ and then terminate. All other fillers have zero ballots listing them first. 
  
Consider what happens at ballot length $h \le \kappa$. If $h=1$, candidate $1$ wins by one vote. For $1< h \le \kappa$, all fillers with zero ballots are eliminated first in some order. Then, the first fillers are eliminated in the order $f^1_1, f^2_1, \dots, f^\kappa_1$. Only fillers $f^i_1$ with $i \le h$ are able to reallocate votes, since ballots for listing $f^j_1$ ($j>h$) first are exhausted after $f^j_1$'s elimination. The first-place vote counts after all fillers are eliminated are thus $\kappa+2$ for winner 1, $\kappa+1+i$ for winners $2, \dots, h$ and $\kappa+1$ for winners $h+1, \dots, \kappa$. With no more reallocations taking place, candidate $h$ wins. For $h> \kappa$, candidate $\kappa$ still wins. This construction therefore results in $\kappa$ distinct truncation winners.

The total number of candidates in this construction is $\kappa + \sum_{i=2}^{\kappa} (i-1) = \kappa(\kappa +1) / 2$. The total number of voters is $\kappa(\kappa + 1) + 1 + \sum_{i = 2}^\kappa i = 3\kappa(\kappa+1)/2$, as claimed.
\end{proof}

The exact upper bound on the number of truncation winners for single-peaked (and single-crossing) preferences remains an open question---it could be as large as $k-2$. Additionally, we do not know a non-trivial lower bound on the number of achievable truncation winners for single-crossing or 1-Euclidean profiles. 

\subsection{Restrictions on ties}
Since our main theorem allows ties (albeit only ties that do not affect the winners), one might be concerned that the large number of truncation winners is a byproduct of these ties. In the following results, we show that even if no vote counts are ever tied, there can still be arbitrary truncation winner sequences. We can therefore get any feasible winner sequence regardless of the tiebreaking rule.  As before, we start by establishing lower bounds on the number of voters required for $k-1$ truncation winners and then provide a matching construction for tie-free profiles achieving any truncation winner sequence. 

\begin{theorem}\label{thm:no-elim-tie-bound}
  For any $k \ge 3$, an elimination-tie-free profile must contain at least $(k^3-3k)/2$ voters in order to produce $k-1$ truncation winners.
\end{theorem}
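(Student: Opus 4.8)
The plan is to extend the voter‑counting argument behind Theorem~\ref{thm:tie-bound}, using elimination‑tie‑freeness both to make every separation strict and---this is the extra leverage---to pin down the first two eliminated candidates at \emph{every} ballot length. Suppose an elimination‑tie‑free profile has $k-1$ truncation winners. Relabel candidates so that candidate $i$ is the winner at ballot length $i$ for $i\in[k-1]$, let candidate $k$ be the one that never wins, and write $a_i$ for the number of ballots ranking $i$ first. First I would record that round~$1$ eliminates the (unique, by elimination‑tie‑freeness) candidate with fewest first‑place votes at \emph{every} ballot length, since truncation never changes who is ranked first; being eliminated first everywhere, that candidate cannot win, so it is candidate $k$, giving $a_k<a_j$ for all $j\ne k$, and symmetrically the length‑$1$ (plurality) winner is the strict first‑place maximum, so $a_1>a_j$ for all $j\ne 1$. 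Elimination‑tie‑freeness at length~$1$ also forces all $a_i$ to be distinct.

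The crux is the next observation. At any length $h\ge 2$, after round~$1$ each surviving candidate $j$ holds exactly $a_j+b_j$ votes, where $b_j$ is the number of candidate $k$'s ballots whose second entry is $j$; crucially this total is the same for every $h\ge 2$, so round~$2$ eliminates one and the same candidate at all $h\ge 2$. That candidate wins at no length $\ge 2$, so it is none of $2,\dots,k-1$, and it is not $k$, hence it must be candidate~$1$. Elimination‑tie‑freeness then makes candidate~$1$ the strict last place after round~$1$: $a_1+b_1<a_j+b_j$, so $b_j\ge (a_1-a_j)+1$ for each $j\in\{2,\dots,k-1\}$.

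Summing these over $j$ and using $\sum_{j\ne k} b_j\le a_k$ yields the single controlling inequality $a_k\ge\sum_{j=2}^{k-1}(a_1-a_j+1)$. It then remains to minimize the voter count $\sum_{i=1}^k a_i$ over positive integers subject to this inequality together with ``$a_1$ strictly largest, $a_k$ strictly smallest, all $a_i$ distinct.'' Writing $x=a_1-a_k$ and $y_j=a_j-a_k$ for $j\in\{2,\dots,k-1\}$, these $y_j$ must be $k-2$ distinct integers in $\{1,\dots,x-1\}$ (so $x\ge k-1$), the inequality becomes $a_k\ge (k-2)(x+1)-\sum_j y_j$, and the objective is $k\,a_k+x+\sum_j y_j$. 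Pushing $a_k$ to its lower bound and $\sum_j y_j$ to its maximum $(k-2)x-\binom{k-1}{2}$, the objective collapses to $(k-1)x+k(k-2)+(k-1)\binom{k-1}{2}$; taking $x=k-1$ gives $(k-1)^2+k(k-2)+(k-1)\binom{k-1}{2}=(k^3-3k)/2$, with $k=3$ (where this equals $9$) needing no separate treatment.

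The step I expect to carry the weight is the crux observation that the post‑round‑$1$ tallies are independent of ballot length, so that ``the candidate eliminated second'' is a single candidate across all lengths $\ge 2$ and must therefore be candidate~$1$: this is exactly what turns the constant‑factor slack available under mere consequential‑tie‑freeness into a cubic‑versus‑quadratic gap. The remainder is a finite, somewhat fiddly integer optimization; note that I only need the displayed inequality as a \emph{necessary} condition, so whether additional constraints from later elimination rounds would make the bound non‑tight is not an issue here---that question belongs to the matching construction.
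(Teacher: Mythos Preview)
Your argument is correct and follows essentially the same route as the paper's proof: identify that the first‑eliminated candidate (your candidate $k$) and the second‑eliminated candidate (your candidate $1$, the plurality winner) are fixed across all relevant ballot lengths, deduce that candidate $k$'s second‑place votes must push every other winner strictly past candidate $1$, and then minimize the total vote count under the resulting inequalities together with the all‑distinct constraint on first‑place tallies. The paper labels candidates by sorted first‑place count rather than by winning ballot length and jumps directly to the extremal configuration $x_i = T_{k-1}-1+(k-i)$ instead of casting the last step as an explicit integer program, but the content is the same; your presentation just makes the optimization more transparent.
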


\begin{proof}
Let $x_1 > x_2 > \dots > x_{k-1}>  x_k$ be the first place vote counts sorted in strictly descending order and index candidates in this order. Note that the inequalities must be strict so that eliminations at $h=1$ have no ties. As in the proof of \Cref{thm:tie-bound}, candidate 1 must be overtaken by candidates $2, \dots, k-1$ when candidate $k$ redistributes votes ($h\ge 2$). In order to make candidate $2$ overtake candidate 1 after $k$ is eliminated, $k$ must redistribute at least two ballots to candidate $2$. Similarly, candidate $k$ must redistribute at least $i$ ballots to each candidate $i = 2, \dots, k-1$ for them to overtake candidate $i$. This requires  at least $\sum_{i=2}^{k-1} i= T_{k-1} -  1$ ballots listing $k$ first, where $T_k = k(k+1)/2$ is the $k$th triangular number.  

Candidate $k-1$ thus needs at least $T_{k-1} -  1 + 1$ ballots listing them first since $x_{k-1} > x_k$. Similarly, candidate $i$ needs at least $T_{k-1} -  1 + k - i$ ballots listing them first. Adding up these lower bounds yields the desired lower bound:

\begin{align*}
  \sum_{i = 1} ^k (T_{k-1} -  1 + k - i) &= k(T_{k-1} -  1) + \sum_{i = 1} ^k (k - i)\\
  &= k(T_{k-1} -  1) + T_{k-1}\\
  &= (k+1)(T_{k-1}) - k\\
  &= (k+1)(k-1)k/2 - k\\
  &= (k^3 - 3k)/2.
\end{align*}
\end{proof}

\begin{theorem}\label{thm:no-tie-bound}
  For any $k \ge 3$, a tie-free profile must contain at least $(2k^3 - 5 k^2 + 3k)/2$ voters in order to produce $k-1$ truncation winners.
\end{theorem}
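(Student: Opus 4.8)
The plan is to follow the proof of \Cref{thm:no-elim-tie-bound} and extract one extra numerical constraint from the stronger hypothesis. I would index the candidates $1,\dots,k$ so that their first-place vote counts satisfy $x_1 > x_2 > \dots > x_k$; these inequalities are strict because running IRV at $h=1$ involves no redistribution, so tie-freeness at $h=1$ is equivalent to all $x_i$ being distinct. Candidate $k$ has the unique minimum first-place count, hence is eliminated first at every ballot length and never wins, so the $k-1$ truncation winners are precisely candidates $1,\dots,k-1$. As in the proofs of \Cref{thm:tie-bound} and \Cref{thm:no-elim-tie-bound}, I would then argue that candidate $1$ (the $h=1$ plurality winner) must be eliminated \emph{second} at every $h\ge 2$: after candidate $k$ is removed, the surviving first-place counts are the same for all $h\ge 2$ (a $k$-led ballot contributes its original second choice, which survives truncation once $h\ge 2$), so the second-eliminated candidate $e$ is independent of $h\ge 2$; since $e$ is a truncation winner but cannot win at any length $\ge 2$ where it is eliminated second, it must be the $h=1$ winner, namely candidate $1$.

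Next I would pass to the configuration just after candidate $k$'s elimination, for a fixed $h\ge 2$. Writing $y_i = x_i + r_i$ for the new count of candidate $i\in\{1,\dots,k-1\}$, where $r_i\ge 0$ counts the $k$-led ballots whose next choice is $i$ (independent of $h\ge 2$), I would record the resource constraint $\sum_{i=1}^{k-1} r_i \le x_k$. Because candidate $1$ is eliminated second, $y_1 < y_i$ for all $i\in\{2,\dots,k-1\}$, and since $y_1 = x_1 + r_1 \ge x_1$ this gives $y_i \ge x_1 + 1$ for $i\ge 2$. The extra ingredient, and the source of the gap over \Cref{thm:no-elim-tie-bound}, is that full tie-freeness forces $y_1,\dots,y_{k-1}$ to be pairwise distinct, since this configuration is an honest intermediate state of the IRV run at length $h$. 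Hence $y_2,\dots,y_{k-1}$ are $k-2$ distinct integers each $\ge x_1+1$, so $\sum_{i=2}^{k-1} y_i \ge (k-2)x_1 + T_{k-2}$ (with $T_n = n(n+1)/2$ the $n$th triangular number) and therefore $\sum_{i=1}^{k-1} y_i \ge (k-1)x_1 + T_{k-2}$.

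To close, I would chain two bounds on the voter count $n = \sum_{i=1}^k x_i$. From $x_k \ge \sum_{i=1}^{k-1} r_i$ we get $n \ge \sum_{i=1}^{k-1}(x_i + r_i) = \sum_{i=1}^{k-1} y_i \ge (k-1)x_1 + T_{k-2}$. Conversely, strictly decreasing nonnegative integers satisfy $x_i \le x_1 - (i-1)$, so $n \le k x_1 - T_{k-1}$. Combining these forces $x_1 \ge T_{k-1} + T_{k-2}$, and plugging that back into the first bound gives $n \ge (k-1)(T_{k-1}+T_{k-2}) + T_{k-2} = (k-1)T_{k-1} + kT_{k-2} = \tfrac{k(k-1)}{2}\big((k-1)+(k-2)\big) = \tfrac{k(k-1)(2k-3)}{2} = \tfrac{2k^3 - 5k^2 + 3k}{2}$, as claimed. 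When $k=3$ the index set $\{2,\dots,k-1\}$ is a singleton, the distinctness step is vacuous, and the formula degenerates to the value $9$ already given by \Cref{thm:no-elim-tie-bound}, so no separate case is needed.

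I expect the main obstacle to be the justification in the middle step: one must check that the redistribution counts $r_i$ are genuinely the same for every $h\ge 2$ (so that "$y_1,\dots,y_{k-1}$ distinct" is a single well-defined condition) and that the post-first-elimination configuration is actually visited while running IRV, so that the tie-freeness hypothesis rules out \emph{every} coincidence among the $y_i$ and not merely $y_1$ failing to be the unique minimum. Once that is pinned down, the remainder is elementary manipulation of triangular numbers and requires no analysis of IRV rounds past the second.
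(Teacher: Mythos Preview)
Your proposal is correct and follows essentially the same approach as the paper: both arguments rest on the single new observation that full tie-freeness forces the post-redistribution counts $y_2,\dots,y_{k-1}$ to be pairwise distinct integers strictly above $y_1\ge x_1$. The only difference is bookkeeping. The paper lower-bounds $x_k$ directly by $(k-2)(k-1)$ (comparing the minimum possible $\sum_{i\ge 2} y_i$ to the maximum possible $\sum_{i\ge 2} x_i$) and then sums $x_i\ge x_k+(k-i)$; you instead squeeze $n$ between $(k-1)x_1+T_{k-2}$ and $kx_1-T_{k-1}$ to first extract $x_1\ge (k-1)^2$ and then substitute back. The two computations are algebraically equivalent and yield the same bound.
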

\begin{proof}
The argument is almost the same as in the proof of \Cref{thm:no-elim-tie-bound}, except that when candidate 1 is overtaken by candidates $2, \dots, k-1$, the overtaking candidates cannot be tied afterwards. As before, candidate $k$ needs to distribute at least $k-1$ ballots to candidate $k-1$ to make them overtake candidate $1$. But now, they cannot merely redistribute $k-2$ to candidate $k-2$, since this could cause a tie with candidate $k-1$. In order to make all of $2, \dots, k-1$ overtake candidate 1 and not emerge in a tie, the lowest possible totals $2, \dots, k-1$ could have after reallocation are $x_1+1, x_1+2, \dots, x_1+k-2$, where $x_1$ is the first-round vote total of candidate 1. Thus, the  number of votes candidate $k$ must reallocate is at least $\sum_{i = 1}^{k-2}(x_1 + i) - \sum_{i=1}^{k-2}(x_1 - i)$, where the second sum is an upper bound on the number of votes candidates $2, \dots, k-1$ have in round 1, given that they are all behind candidate 1 and not tied. This allows us to calculate the minimum number of ballots listing $k$ first:

\begin{align*}
  \sum_{i = 1}^{k-2}(x_1 + i) - \sum_{i=1}^{k-2}(x_1 - i) &= 2\sum_{i=1}^{k-2} i\\
  &= (k - 2)(k - 1)
\end{align*}

Candidate $k-1$ thus needs at least $(k - 2)(k - 1) + 1$ ballots listing them first since $x_{k-1} > x_k$. Similarly, candidate $i$ needs at least $(k - 2)(k - 1) + k - i$ ballots listing them first. Adding up these lower bounds yields the desired lower bound:

\begin{align*}
  &\sum_{i = 1} ^k ((k - 2)(k - 1) + k - i) \\
  &= k(k - 2)(k - 1) + \sum_{i = 1} ^k (k - i)\\
  &= k(k - 2)(k - 1) + k(k-1) / 2\\
  &= (2k^3 - 5k^2 + 3k)/2.
\end{align*}
\end{proof}

Note that for consequential-tie-free profiles, the lower bound on voters for $k-1$ truncation winners is $\Omega(k^2)$, but $\Omega(k^3)$ for elimination-tie-free and tie-free profiles.

\begin{theorem}\label{thm:no-tie-construction}
Given the same setup as in \Cref{thm:tie-construction}, there exists a tie-free profile with $(2k^3 - 5 k^2 + 3k)/2$ ballots whose sequence of truncated IRV winners from $h=1, \dots, k-1$ is $w_1, \dots, w_{k-1}$.
\end{theorem}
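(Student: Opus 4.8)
The impossibility half of the statement is inherited verbatim from \Cref{thm:tie-construction}: if $w_h\le h$ for some $h$, the candidate meant to win at ballot length $h$ has already been eliminated by step $h$ under every ballot length $\ge h$, and this argument uses no ties, so no profile at all --- let alone a tie-free one --- can realize such a sequence. The content is the construction, and the plan is to adapt the profile of \Cref{thm:tie-construction}. That profile maintains, after each elimination, a configuration in which all surviving candidates but two are tied for second place, with the next winner one vote ahead and the next eliminee one vote behind; the tied block is exactly the source of ties. I would replace it with a \emph{strict descending staircase} of vote counts among all surviving candidates but one, again with the next winner on top and the next eliminee on the bottom. The ballot skeleton stays the same: the ballots in $S_i$ rank candidates $1,\dots,i-1$ in positions $2,\dots,i$, carry a single position-$(i+1)$ entry recording where the ballot is routed when candidate $i$ is eliminated in round $i$, and then terminate --- so ballots routed to a candidate are exhausted when that candidate leaves, and the ballots of $S_i$ are exhausted at step $i$ once the ballot length equals $i$, freezing the standings. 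What changes are the multiplicities: set the first-place counts to the exact values forced by the lower bound of \Cref{thm:no-tie-bound}, namely $(k-2)(k-1)$ for candidate $1$ (eliminated first at every ballot length) and the $k-1$ distinct values $(k-2)(k-1)+1,\dots,(k-2)(k-1)+(k-1)$ for candidates $2,\dots,k$ with $w_1$ receiving the largest, and choose the position-$(i+1)$ entries so that eliminating candidate $i$ in round $i$ reshapes the surviving block into the next staircase, headed by $w_{i+1}$ and tailed by candidate $i+1$.

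Concretely, the steps in order are: (1) fix the feasible sequence and the elimination-order labeling exactly as in \Cref{thm:tie-construction}; (2) specify the initial totals above and, working backwards through $w_1,\dots,w_{k-1}$, the exact staircase the profile must exhibit after each of rounds $1,\dots,k-2$; (3) for every $i$, read off from consecutive staircases how many of candidate $i$'s ballots must send a vote to each other candidate in round $i$, and verify this number is at most $|S_i|$ --- round $1$ is tight at exactly $(k-2)(k-1)$ redistributed ballots, and every later round requires only $O((k-i)^2)\le (k-2)(k-1)<|S_i|$ transfers, with the leftover ballots of $S_i$ simply terminating at position $i$; (4) re-run the induction of \Cref{thm:tie-construction}, now carrying the precise staircase values rather than ``tied block plus two extremes,'' to conclude that at ballot length $h$ the eliminations proceed $1,2,\dots,h$ and the standings then freeze, so $w_h$ wins; and (5) check tie-freeness --- every configuration reached under every ballot length is a strict staircase, so no two surviving candidates are ever tied --- and that the ballot total is $k(k-2)(k-1)+\sum_{j=1}^{k-1}j=(2k^3-5k^2+3k)/2$, as the theorem demands (this total is independent of the sequence, since the first-place counts are fixed).

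The main obstacle is steps (2)--(3): choosing the routing so that the staircase is reproduced \emph{exactly} after every elimination and simultaneously at every ballot length, while the grand total lands precisely on the lower bound with no slack. The delicate points are that redistribution can only \emph{raise} a candidate's tally --- so the target staircase at each round must sit high enough that the previous leader is overtaken without lowering anyone, yet low enough to respect the fixed ballot budget and keep all intermediate counts distinct --- together with the degenerate cases $w_i=w_{i+1}$ (the same candidate must remain on top through round $i$, possibly with no redistribution) and $w_i=i+1$ (the length-$i$ winner is exactly the candidate that becomes the staircase bottom one round later, forcing the entire new staircase above the old leader's count). Once the staircase invariant is pinned down case by case, the identification of the winner at each ballot length and the tie-freeness claim both follow from the same induction as in \Cref{thm:tie-construction}.
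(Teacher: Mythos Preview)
Your plan is precisely the paper's approach: replace the tied second-place block of \Cref{thm:tie-construction} by a strict unit-gap staircase, keep the same ballot skeleton (positions $2,\dots,i$ list $1,\dots,i-1$; position $i{+}1$ routes the ballot; then terminate), and set the first-place counts to the lower-bound values $(k-2)(k-1)+k-j$. The one piece you flag as the obstacle but do not resolve --- the exact staircase order at each round --- is what the paper supplies via its \emph{$f$-sequence}: at round $\ell$, order the survivors by listing the distinct entries of $w_\ell,\dots,w_{k-1}$ in order of first appearance (the ``winner prefix''), followed by the remaining candidates in reverse elimination order (the ``loser suffix''); this turns your anticipated case split ($w_i=w_{i+1}$; $w_i=i+1$; $w_i$ recurs later; $w_i$ never recurs) into four explicit redistribution rules, each using at most $(k-2)(k-1)\le|S_i|$ ballots, and the induction then goes through exactly as you describe.
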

\begin{proof}
  The construction follows the same idea as in \Cref{thm:tie-construction}, but we no longer have the luxury of maintaining the tie for second place among all candidates who are not about to win or about to be eliminated. Instead, we will maintain gaps of a single vote between candidates, as in our lower bound proof. However, the  order of candidates matters. Given a winner sequence $w_1, \dots, w_{k-1}$, define its $f$-sequence as follows. Let $f_1, \dots, f_\ell$ be the $\ell \le k-1$ distinct truncation winners in the sequence $w_1, \dots, w_{k-1}$ ordered by their first appearance in this sequence. Fill the remainder of the sequence $f_{\ell+1}, \dots, f_k$ in reverse order of full-ballot elimination (i.e., $f_k = 1$), skipping candidates already in $f_1, \dots, f_\ell$. For example, the $w$-sequence $4, 3, 4, 5$ for $k = 5$ would result in the $f$-sequence $4, 3, 5, 2, 1$ (recall that candidates are labeled in order of their full-ballot IRV elimination). Assign ballots to each candidate so that their first place vote counts result in the $f$-sequence, with candidate candidate $f_j$ receiving $(k - 2)(k-1) + k - j$ ballots listing them first. Call the first part of the $f$-sequence the \emph{winner prefix} and the second part the \emph{loser suffix}. We will maintain the following invariant: before step $\ell \le h$ of IRV, the order of the remaining candidates $\ell, \ell+1, \dots, k$ by vote count is the $f$-sequence of $w_\ell, w_{\ell+1}, \dots, w_{k-1}$.
  
 As before, let $S_{i}$ denote the set of ballots listing $i$ first. Except for $i=1$, all ballots in $S_i$ rank candidates $1, \dots, i-1$ in positions $2, \dots, i$. Next, we will fill in position $i+1$ for each $S_i$ to maintain the $f$-sequence invariant. 
  
  \underline{Case (1)} If $w_{i} = w_{i+1}$, all ballots in $S_i$ terminate after position $i$. 
 
  \underline{Case (2)} If $w_i = i + 1$, $k - i$ ballots in $S_i$ list each of candidates $i+2, \dots, k$ in position $i+1$. This requires up to $(k - 2)(k-1)$ ballots. 
  
  \underline{Case (3)} If $w_i$ next wins at ballot length $\ell > i+1$, then we need to insert $w_i$ into this position in the winner prefix. Consider the sequence of winners $w_{i+1}, \dots, w_{\ell - 1}$. Let $w_j$ be the last candidate in this sequence to make their first appearance. We will reallocate votes so that $w_i$ is one vote  behind $w_j$. Let $c$ be the size of the vote gap between $w_j$ and $w_i$ before step $i$ of IRV. For instance, $c=1$ if $w_j = w_{i+1}$. For each candidate starting at $w_{i+1}$ and going down the order of candidates by decreasing vote count before step $i$ of IRV to $w_j$, $c+1$ ballots in $S_i$ list that candidate in position $i+1$. For each candidate starting after $w_j$ in vote count order  and going down to $i + 1$, $c$ ballots in $S_i$ list that candidate in position $i+1$. This requires at most $(k-3)(k-1) < (k - 2)(k-1)$ ballots, an upper bound achieved if $w_j$ has only one more vote than $i+1$ and $i = 1$. 
    
  \underline{Case (4)} If $w_i$ does not appear again in the sequence $w_{i+1}, \dots, w_{k-1}$, then we will insert it into its correct position in the loser suffix. Consider the sequence of subsequent losers $i+1, \dots, k$ and remove candidates that win at truncations lengths $i+1, \dots, k-1$. Let $j$ be the largest-indexed candidate in this pared-down sequence whose index is smaller than $w_i$ (at least one such candidate exists since $i+1$ is eliminated before $w_i$ and can't win at ballot lengths $>i$). We will insert $w_i$ into the loser sequence so that they have one more vote than $j$. Let $c$ be the size of the vote gap between $w_i$ and $j$ before step $i$ of IRV. Consider the order of candidates by vote count before step $i$ of IRV. For each candidate with more votes than $j$ (excluding $w_i$), $c$ ballots in $S_i$ list that candidate in position $i+1$. For each candidate with fewer votes than $j$ (including $j$ but excluding $i$), $c - 1$ ballots in $S_i$ list that candidate in position $i+1$. After reallocation, $w_i$ will then be one vote ahead of $j$ and one vote behind the next candidate above them. This requires at most $(k-3)(k-1)< (k - 2)(k-1)$ ballots, an upper bound achieved if $j=i+1$ and $i =1$.

All ballots terminate after their last specified entry. We now prove that the truncation winner sequence of this profile is $w_1, \dots, w_{k-1}$. We'll prove inductively that the $f$-sequence invariant is maintained by construction.

\underline{Base case} ($\ell = 1$): By construction, the first place vote counts are exactly the $f$-sequence of $w_1, \dots, w_{k-1}$.

\underline{Inductive case} ($\ell\ge 2$): By inductive hypothesis, we have that after step $\ell-1 < h$, the candidates $\ell-1, \dots, k$ were in their $f$-sequence order by decreasing vote count. We also know $\ell-1$ must have been in last place, since they are eliminated $(\ell-1)$st. Consider what occurs when $\ell-1$ is eliminated. We will mirror the four cases of the construction. (1) If $w_{\ell-1} = w_{\ell}$, position $\ell-1$ is empty and their ballots are all exhausted, leaving the order as is. The order of the candidates by vote count remains the $f$-sequence of the remaining candidates. (2) If $w_{\ell-1}= \ell$, then all candidates between $w_{\ell-1}$ and $\ell-1$ overtake $w_\ell$. The new order of candidates is again the $f$-sequence of the remaining candidates, since the winner prefix remains the same starting from $w_\ell$ and $\ell$ moves into last place. (3) If $w_{\ell-1}$ wins again at some $h>\ell-1$ , then our construction places it in the winner prefix exactly where it belongs: in order of first subsequent win. The loser suffix remains unchanged, leaving the correct $f$-sequence. (4)  If $w_{\ell-1}$ does not win again at $h> \ell-1$, then our construction inserts it into the loser suffix where it belongs: just before the highest-indexed non-subsequent-winner with a lower index than $w_{\ell-1}$. Here, the winner prefix in unaffected, leaving the correct $f$-sequence. 

By construction, as soon as a ballot is reallocated, it becomes exhausted. Additionally, just before step $h$ of IRV, all remaining truncated ballots are exhausted. Thus the order remains the same as trailing candidates are eliminated and $w_h$ wins, since they were in the lead at the front of the $f$-sequence before step $h$.  

Finally, this construction uses the number of ballots claimed:

\begin{align*}
  &\sum_{i = 1}^k \left[(k - 2)(k-1) + k - j\right] \\
  &= k(k - 2)(k-1) + k^2 - \sum_{i = 1}^k j\\
  &= k(k - 2)(k-1)+ k^2 - k(k+1)/2\\
  &= (2k^3 - 5 k^2 + 3k)/2.
\end{align*}
\end{proof}

The constructions for consequential-tie-free and tie-free profiles both use $\Theta(k^2)$ \emph{distinct} ballots. However, only $\Theta(k)$ distinct ballots are required to produce $k-1$ truncation winners. This is asymptotically tight, since each candidate who wins at some ballot length needs at least one ballot type listing them first.

\begin{theorem}\label{thm:k-types}
Given $k>3$ candidates, there is a tie-free profile producing $k-1$ truncation winners with $\Theta(k^3)$ voters of $\Theta(k)$ types.
\end{theorem}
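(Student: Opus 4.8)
The plan is to take the tie-free construction from \Cref{thm:no-tie-construction}, which already produces $k-1$ truncation winners with $\Theta(k^3)$ voters, and observe that it can be realized with only $\Theta(k)$ distinct ballot types rather than $\Theta(k^2)$. The reason the earlier construction uses $\Theta(k^2)$ distinct ballots is that, within each set $S_i$ of ballots listing candidate $i$ first, different sub-blocks of ballots specify different candidates in position $i+1$ (and they all agree on positions $2,\dots,i$, namely candidates $1,\dots,i-1$). So within $S_i$ there are at most $O(k)$ distinct ballot types, and summing over $i$ gives $O(k^2)$. To get down to $\Theta(k)$ types, I would collapse the position-$(i+1)$ choices: instead of sending blocks of ballots from $S_i$ to many different receiving candidates in round $i$, redesign the tails so that all ballots in $S_i$ that are destined to ``move'' do so along a single shared continuation. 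Concretely, after candidates $1,\dots,i-1$ are eliminated, I want the reallocation out of $S_i$ to be expressible as ``$m_i$ ballots continue to a single next candidate $g(i)$'' for one fixed target $g(i)$, plus the remaining ballots in $S_i$ simply terminating at position $i$. This makes $S_i$ consist of just two types (a terminating type and a forwarding type), for $2k-O(1)$ types total across all $i$, hence $\Theta(k)$.

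The key steps, in order: (1) Recall that in \Cref{thm:no-tie-construction} the only purpose of the fine-grained position-$(i+1)$ assignments in $S_i$ was to re-sort the remaining candidates into the $f$-sequence of $w_i,\dots,w_{k-1}$, shifting the incoming winner $w_{i+1}$ to the front and the departing loser $i+1$ to the back while preserving single-vote gaps. (2) Replace that multi-target reallocation with a chained single-target reallocation: when candidate $i$ is eliminated, route all of its ``active'' ballots to one candidate; the net effect on the vote-count order can be made identical to the needed permutation if we also pre-load the initial vote counts appropriately and let the chain propagate over several rounds. This is the main design choice and I expect it to require re-deriving the per-round vote totals, which is where the voter count inflates from the $\Theta(k^3)$ of \Cref{thm:no-tie-construction} to a (still $\Theta(k^3)$) possibly larger cubic expression — but since the theorem only asserts $\Theta(k^3)$ voters, I do not need to optimize the constant. (3) Verify that ties never occur: maintain the same single-vote-gap invariant as in the earlier proof, checking it survives the coarser reallocation. (4) Count types: each $S_i$ contributes $O(1)$ types; the ``loser suffix'' candidates with no incoming winner role contribute $O(1)$ types each as well; total $\Theta(k)$. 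Count voters: sum the per-candidate first-place totals, which is a polynomial of degree $3$ in $k$, so $\Theta(k^3)$. (5) Conclude that the resulting profile is tie-free, has $\Theta(k)$ types and $\Theta(k^3)$ voters, and realizes (say) the canonical $k-1$-distinct-winner sequence $w_h = h+1$.

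The main obstacle is step (2): showing that the required reordering of remaining candidates at each elimination round — pulling one candidate to the front, pushing one to the back, shifting everyone else — can be accomplished while each eliminated candidate forwards all its ballots to a \emph{single} successor rather than spreading them. The trick I would use is that a single-target forward, repeated over consecutive rounds, composes into a cyclic-shift-like permutation of the ranking order; by choosing the initial first-place counts so that the desired final order is exactly this composed permutation applied to the starting order, one shared continuation per $S_i$ suffices. I would also need to be careful that ballots become exhausted at exactly the right round (so that the order freezes just before step $h$ and $w_h$ wins), which is handled exactly as in \Cref{thm:tie-construction,thm:no-tie-construction}: each ballot in $S_i$ lists candidates $1,\dots,i-1$ then at most one more candidate then terminates, so it is exhausted the moment it is reallocated, and all truncated-to-length-$h$ ballots are exhausted by step $h$.
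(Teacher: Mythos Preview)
Your proposal has a genuine gap at exactly the step you flag as the ``main obstacle,'' and the obstacle is fatal for the scheme as stated. You commit to each $S_i$ containing only two types: a terminating type and a forwarding type that sends $m_i$ ballots to a \emph{single} target $g(i)$, with immediate exhaustion after one reallocation. But recall the mechanism underlying the lower bound in \Cref{thm:tie-bound}: for $k-1$ truncation winners, the plurality winner $w_1$ (your candidate $2$) must be eliminated second at every $h\ge 2$, which forces the first-eliminated candidate (your candidate $1$) to redistribute enough votes that \emph{all} of candidates $3,\dots,k$ overtake candidate $2$. With a single-target forward out of $S_1$, only $g(1)$ gains votes; the other $k-3$ candidates cannot overtake the plurality winner, so candidate $2$ is not eliminated second, and the construction collapses at the very first round. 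No amount of ``pre-loading initial counts'' fixes this, since those counts are what made candidate $2$ the plurality winner in the first place.

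The paper's construction sidesteps this by abandoning the per-$S_i$ two-type constraint for the first-eliminated candidate: that candidate alone carries $k-2$ ballot types (one for each of the $k-2$ candidates it must boost past the plurality winner), while every other candidate uses only $O(1)$ types, giving $3k-2=\Theta(k)$ types in total. The dynamics are also quite different from the $f$-sequence invariant of \Cref{thm:no-tie-construction}: rather than re-sorting the full order at each elimination, the paper designates one candidate (labelled $k-1$) as a ``sweeper'' that receives two votes at each successive elimination and leapfrogs exactly one candidate per round; truncation at length $h$ cuts off the sweeper's next gain, causing it to be eliminated and pass a large block of $2k$ votes down a fixed chain to candidate $h$, who then wins. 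So the $\Theta(k)$ budget is spent on the first elimination's fan-out, not saved by collapsing it.
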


\begin{proof}
We'll construct a set of ballots such that candidate $h$ wins at truncation $h = 1, \dots, k-1$. Call the last candidate (the first one eliminated) $k$. Let $x = (2k-4)(k-2)$. Construct $x + 2(k - 1)$ ballots ranking $1$ first and $x+2i$ ballots ranking candidates $i = 2, \dots, k - 2$ first. Construct $x + 3$ ballots ranking candidate $k - 1$ first and $x$ ranking candidate $k$ first. Thus, the order of candidates from most to least first-place votes is $1, k-2, k-3, k-4, \dots, 3, 2, k-1, k$ and the total number of ballots is $\Theta(k^3)$. We now fill in the ballots for each of the candidates.

\underline{Candidate $k$}: Make $2k-4$ of the ballots ranking $k$ first rank each of $2, \dots, k-1$ second. These ballots then terminate. This requires $(2k-4)(k-2)$ ballots of $k - 2$ types.

\underline{Candidate $k-1$}: $2k$ of the ballots ranking $k-1$ first rank candidates $2, \dots, k-2$ in positions $2, \dots, k-2$, then terminate. The remaining ballots ranking $k-1$ first have length 1. Candidate $k-1$ thus uses only two ballot types.

\underline{Candidates $i = 1, \dots, k-2$}: Two of the ballots ranking $i$ first rank candidates $k, 1, 2, \dots, i - 1, k-1$ in positions $2, \dots, i+2$, then terminate---note that candidate $1$'s ballots of this form are $(1, k, k-1)$. The remaining ballots ranking $i$ first have length 1. Candidate $i$ thus uses only two ballot types.

Notice that the construction uses $O(k^2)$ ballots ranking each candidate first, for a total of $O(k^3)$ ballots. These are split among $k-2 + 2 + 2(k-2) = 3k - 2 = \Theta(k)$ types. We now show that truncating ballots at length $h < k$ results in candidate $h$ winning under IRV.  
 
 If $h=1$, candidate 1 wins since they have the most first-place votes. 
 
 If $h=2$, the first candidate eliminated is candidate $k$. Their second-place votes cause candidates $2, \dots, k-1$ to overtake candidate $1$, who is eliminated second. However, candidate $1$'s ballots of the form $(1, k, k-1)$---truncated to $(1, k)$---are now exhausted, so no reallocation occurs. This causes candidate $k-1$ to be eliminated third, and their $2k$ ballots ranking candidate $2$ second cause candidate $2$ to take the lead. The eliminations then proceed in the order $3, \dots, k-2$, with no reallocation since those candidates' ballots are all exhausted. Candidate 2 wins.
 
 For $h > 2$, we'll show inductively that for $\ell = 2, \dots, h-1$ ($h \le k-1$), the $\ell$th candidate eliminated is candidate $\ell -1$, which causes candidate $k-1$ to jump one vote ahead of candidate $\ell$, who falls into last place.
 
 \underline{Base case ($\ell = 2$)}: The first two eliminations proceed as they did for $h=2$. However, when candidate 1 is eliminated, their two ballots ranking $k-1$ third go to candidate $k-1$, since $h>2$. Before this reallocation, candidate $k-1$ was in second-to-last place with $x+3 + 2k-4 = x + 2k - 1$, one vote behind candidate $2$, who had $x + 4 + 2k - 4 = x + 2k$ votes. The reallocation of candidate 1's ballots causes candidate $k-1$ to jump one vote ahead of candidate $2$, who falls into last place.
 
 \underline{Inductive case ($\ell > 2$)}: By inductive hypothesis, candidate $\ell-2$ was last eliminated, which caused candidate $\ell -1$ to drop into last place, one vote behind candidate $k - 1$. Thus, candidate $\ell - 1$ is eliminated next. The next uneliminated candidate listed on their two ballots of length $> 1$ is $k-1$, since candidates $k, 1, \dots \ell-2$ have all been eliminated by inductive hypothesis and the cases above. When candidate $\ell - 1$ is eliminated, those two ballots cause candidate $k-1$ to jump one vote ahead of candidate $\ell$, who has $x + 2\ell$ votes. Since candidate was one vote ahead of candidate $\ell - 1$ (who had $x + 2(\ell - 1)$ votes) and then gained two more, candidate $k -1$ is therefore one vote ahead of candidate $\ell$ after the $\ell$th elimination and redistribution. 
 
 We can now show that for $2 < h < k$, candidate $h$ wins when the ballot length is $h$. Consider such a ballot length $h$. By the inductive argument above, the $(h-1)$th candidate eliminated is candidate $h-2$, which causes candidate $h-1$ to fall into last place, one vote behind candidate $k-1$. Notice that when candidate $h-1$ is eliminated, they do not reallocate any votes to candidate $k-1$, since candidate $k-1$ appears at position $h+1$ on their two nontrivial ballots. Thus candidate $k-1$ is eliminated after candidate $h-1$ (note that if $h=k-1$, candidate $k-1$ wins after $h-1 = k-2$ is eliminated). The $2k$ nontrivial ballots assigned to $k-1$ are then redistributed to candidate $h$, since they are the lowest-indexed non-eliminated candidate and they appear at position $h$ on candidate $k-1$'s nontrivial ballots. These $2k$ ballots are enough to put candidate $h$ in first place. The eliminations then proceed in the order $h+1, h+2, \dots, k-2$, with no reallocation since those candidates' ballots only include candidates indexed lower than them (except $k$ and $k-1$, who have been eliminated). Candidate $h$ then wins.

\end{proof}

\subsection{Full ballots}
So far, all of our constructions have relied on partial ballots. For profiles with full ballots, a simple extension of our constructions using filler candidates allows us to achieve up to $k / 2$ truncation winners, and in fact any feasible sequence of winners in the first half of ballot lengths.

\begin{corollary}\label{cor:full-ballot}
  Let $k = 2\kappa$ for some $\kappa > 3$. Label the candidates $1, \dots, 2\kappa$ in order of their elimination under full ballots. Fix any sequence $w_1, \dots w_{\kappa-1}$ such that $w_h \in \{\kappa + h + 1, \dots, 2\kappa\}$ for all $h \in [\kappa - 1]$. There exists a full-ballot consequential-tie-free profile with $2\kappa^2 - 2\kappa$ voters and a full-ballot tie-free profile with $(2\kappa^3 - 5\kappa^2 + 3\kappa)/2 + \kappa(\kappa-1)/2$ voters whose sequences of truncation winners from $h=1, \dots, \kappa - 1$ are $w_1, \dots, w_{\kappa - 1}$. 
\end{corollary}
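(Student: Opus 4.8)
The plan is to take the partial-ballot constructions from Theorems \ref{thm:tie-construction} and \ref{thm:no-tie-construction} (which use $\kappa$ ``real'' candidates and partial ballots) and pad them out to full ballots by introducing $\kappa$ ``filler'' candidates that get eliminated harmlessly in the first $\kappa$ rounds of IRV, regardless of ballot length. Concretely, I would label the $2\kappa$ candidates so that positions $1, \dots, \kappa$ are the fillers (these are exactly the candidates eliminated first under full ballots, in that order) and positions $\kappa+1, \dots, 2\kappa$ are the real candidates from the earlier construction. The feasibility condition $w_h \in \{\kappa+h+1, \dots, 2\kappa\}$ is then precisely the original feasibility condition $w_h \in \{h+1, \dots, \kappa\}$ shifted by $\kappa$, applied to the real candidates.

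The key steps, in order: (i) Start with the tie-free (resp.\ consequential-tie-free) profile on $\kappa$ candidates realizing the desired winner sequence, as guaranteed by Theorem \ref{thm:no-tie-construction} (resp.\ \ref{thm:tie-construction}). (ii) Append the $\kappa$ fillers to the \emph{bottom} of every existing ballot in a fixed common order, so that each original partial ballot becomes a full ranking over all $2\kappa$ candidates; since the fillers sit below every real candidate on every ballot, they receive no first-place votes and are eliminated in the first $\kappa$ rounds without affecting any real candidate's tally (the redistribution among fillers stays entirely within the filler block). (iii) Add a small number of extra ``filler-only'' ballots — e.g.\ $\kappa(\kappa-1)/2$ of them, arranged so that the fillers are eliminated in a strictly determined order and no elimination tie arises among them — which accounts for the extra $\kappa(\kappa-1)/2$ voters in the tie-free count. (iv) Observe that once all fillers are gone, after round $\kappa$, the remaining election is \emph{exactly} the original $\kappa$-candidate profile (with every ballot now being the original truncated-or-full ballot over the real candidates), so the analysis of Theorems \ref{thm:tie-construction}/\ref{thm:no-tie-construction} applies verbatim: at ballot length $h$ the real-candidate winner is $w_h$. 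The crucial point about ballot length is that truncating to length $h \le \kappa - 1 < \kappa$ removes only filler entries (which sit in positions $\ge \kappa+1$ on every ballot), so the real-candidate portion of every ballot is untouched — hence the restriction to the first $\kappa-1$ ballot lengths.

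I would then verify the voter counts: the consequential-tie-free case needs $2\kappa^2 - 2\kappa$ voters, matching Theorem \ref{thm:tie-construction} applied with $\kappa$ in place of $k$ (no extra ballots needed if the filler elimination order can be forced by the original ballots' tail entries alone, or if we simply allow the harmless ties among fillers — consequential-tie-freeness only forbids ties that change the winner); the tie-free case needs $(2\kappa^3 - 5\kappa^2 + 3\kappa)/2$ from Theorem \ref{thm:no-tie-construction} plus $\kappa(\kappa-1)/2$ extra filler-only ballots to strictly order the filler eliminations. Finally I would confirm consequential-tie-freeness (resp.\ tie-freeness) of the padded profile: any tie must occur either among fillers (harmless, and eliminated by the extra ballots in the tie-free case) or among real candidates (ruled out by the original profile's guarantee, since the real sub-election is unchanged).

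The main obstacle I anticipate is step (iii)/(iv): ensuring the fillers really are eliminated \emph{first} and in a controlled order at \emph{every} ballot length $h \in \{1, \dots, \kappa-1\}$ simultaneously, and that their redistribution never leaks a vote to a real candidate prematurely. Appending fillers at the bottom handles the ``eliminated first'' part cleanly (they have zero first-place votes at every truncation length since $h \ge 1$ means position 1 is always a real candidate), but one must be careful that the inter-filler redistribution — and the extra filler-only ballots — are set up so that no elimination tie among fillers occurs and no filler ballot redistributes into the real block before all fillers are gone; arranging the filler-only ballots as nested prefixes $f_1$; $f_1 f_2$; $f_1 f_2 f_3$; \dots\ (one of each, giving $\kappa - 1$ ballots, or a slightly larger gadget to reach $\kappa(\kappa-1)/2$) is the natural device, and checking it interacts correctly with truncation is the one genuinely fiddly verification.
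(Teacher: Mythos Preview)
Your proposal has a genuine gap in step (ii). You write that appending the $\kappa$ fillers to the bottom of each existing ballot makes it ``a full ranking over all $2\kappa$ candidates,'' and later that the fillers ``sit in positions $\ge \kappa+1$ on every ballot.'' Both claims are false. The partial-ballot constructions in Theorems~\ref{thm:tie-construction} and~\ref{thm:no-tie-construction} produce ballots of varying lengths---some as short as one or two entries---that omit several of the $\kappa$ real candidates. Appending $\kappa$ fillers to a length-$j$ ballot yields a ballot of length $j+\kappa < 2\kappa$ whenever $j<\kappa$, so the result is \emph{not} full; the missing real candidates are still missing. And the fillers land in positions $j+1,\dots,j+\kappa$, not $\kappa+1,\dots,2\kappa$. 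If instead you complete the ballot by first appending the missing real candidates and only then the fillers (so that fillers really do sit in positions $\ge \kappa+1$), the construction breaks for a different reason: truncating to $h\le\kappa-1$ now retains those appended real candidates, so the ballot no longer becomes exhausted where the original partial ballot did, and the carefully engineered redistribution pattern of the base construction is destroyed.

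The paper's fix is different from what you describe, and the placement of the fillers is exactly the point. It inserts the fillers \emph{immediately after} each ballot's original content, padding every partial ballot up to position $\kappa$, and only then fills positions $\kappa+1,\dots,2\kappa$ arbitrarily with whatever candidates remain. The effect is that at any ballot length $h\le\kappa-1$, a ballot whose original content had length $j<h$ shows the original entries in positions $1,\dots,j$ and fillers in positions $j+1,\dots,h$; once the fillers are eliminated first, the ballot is exhausted precisely at position $j$, reproducing the behavior of the original partial ballot. For the tie-free version the paper gives the fillers $0,1,\dots,\kappa-1$ first-place votes (hence the extra $\kappa(\kappa-1)/2$ ballots), rather than a nested-prefix gadget. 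Your instinct to use fillers that are eliminated first is right; what is missing is that they must be interleaved into the ballots at the original exhaustion points, not tacked on at the very end.
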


\begin{proof}
Perform the same constructions as in the proofs of \Cref{thm:tie-construction,thm:no-tie-construction}, but with $\kappa$ instead of $k$. Then add in another $\kappa$ candidates with zero first place votes, which are always immediately eliminated (for the tie-free construction, these candidates need $0, \dots, \kappa-1$ first-place votes to avoid a tie, resulting in the extra $\kappa(\kappa-1)/2$ ballots). Use them to fill in all the partial ballots generated by the construction up to position $\kappa$. Then fill in all ballots with the remaining candidates arbitrarily. Up to $h=\kappa$, the construction performs exactly as before, since all of the filler candidates are eliminated first regardless of ballot length, allowing the ballots to act as if they were partially filled out. No guarantees are made about the behavior under ballot lengths $h = \kappa +1, \dots, 2\kappa$. 
\end{proof}

 While we have not found a general construction with full ballots and $k-1$ truncation winners, we have found full-ballot elimination-tie-free profiles with $k-1$ truncation winners up to $k=10$ using a linear-programming-based search (described at the end of this section). Full ballots make intuitive constructions more challenging, but do not appear to prevent a large number of truncation winners. However, how a full ballot requirement does or doesn't change our main result remains an open question. 

If instead of requiring ballots to be full, we require them to all have length at least $k/2-c$, we can improve the above extension of our constructions and get an additional $c$ ballot lengths at which we can specify the winner.

\begin{corollary}\label{cor:full-ballot-minus-c}
  Let $k = 2 \kappa$ for some $\kappa > 3$. Suppose we require ballots to have length at least $\kappa - c$ for $c < \kappa$. Label the candidates $1, \dots, 2\kappa$ in order of the elimination under full ballots. Fix any sequence $w_1, \dots w_{\kappa + c -1}$ such that $w_h \in \{\kappa - c + h + 1, \dots, 2\kappa\}$ for all $h \in [\kappa - 1]$. There exists a consequential-tie-free profile with $2\kappa^2 - 2\kappa$ ballots and a tie-free profile with $(2\kappa^3 - 5\kappa^2 + 3\kappa)/2 + (\kappa-c)(\kappa-c-1)/2$ ballots whose sequence of truncated IRV winners from $h=1, \dots, \kappa + c - 1$ is $w_1, \dots, w_{\kappa + c - 1}$.
\end{corollary}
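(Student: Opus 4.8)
The plan is to combine the filler-candidate padding trick from Corollary \ref{cor:full-ballot} with the observation that ballots which already have length $\kappa - c$ only need $c$ filler candidates prepended to reach effective partial-ballot behavior up through step $\kappa+c$ of IRV. Concretely, I would start from the base construction of Theorem \ref{thm:tie-construction} (for the consequential-tie-free case) or Theorem \ref{thm:no-tie-construction} (for the tie-free case), run on $\kappa$ "real" candidates labeled $1, \dots, \kappa$ in their full-ballot elimination order, producing the desired winner sequence $w_1, \dots, w_{\kappa-1}$ among those candidates. The ballots generated there are partial, with lengths up to $\kappa$. I then introduce $\kappa - c$ filler candidates and prepend them (in a fixed order) to the front of every ballot, so that every ballot now has length at least $\kappa - c$ as required. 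The filler candidates get first-place vote counts $0, 1, \dots, \kappa-c-1$ in the tie-free case (and all $0$, broken arbitrarily, in the consequential-tie-free case) so that they are the first $\kappa - c$ candidates eliminated, in a known order, with no consequential ties and no elimination ties among them.

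The second step is to check that this prepending shifts the entire dynamics cleanly. After the $\kappa - c$ filler eliminations—which redistribute nothing of consequence, since each ballot's next surviving entry is just the next filler, and after all fillers are gone the ballot behaves exactly as its original partial ballot did—we are left in precisely the state the Theorem \ref{thm:tie-construction}/\ref{thm:no-tie-construction} construction starts from, but now at "IRV step $\kappa - c + 1$" rather than step $1$. Hence the candidate who wins when the original construction's ballots are truncated to length $h$ now wins when the padded ballots are truncated to length $h + (\kappa - c)$. Re-indexing, this means we control the winner at padded ballot lengths $\kappa - c + 1, \dots, \kappa - c + (\kappa - 1) = \kappa + c - 1$; together with the fact that lengths below $\kappa - c$ are disallowed, this is exactly the claimed window $h = 1, \dots, \kappa + c - 1$ with the feasibility constraint $w_h \in \{\kappa - c + h + 1, \dots, 2\kappa\}$ (the "$+1$" accounts for the first eliminated real candidate, just as in Theorem \ref{thm:tie-construction}). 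Finally, I would fill in every remaining unspecified ballot position with the as-yet-unranked real candidates in arbitrary order so that all ballots are full-length—but note they need only be length $\ge \kappa - c$ here, so in fact even this final filling is optional; I would still do it to keep ballots well-defined, observing that those tail entries never matter since all relevant candidates are decided by step $\kappa + c - 1$.

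The voter count then follows by addition: the base construction contributes $2\kappa^2 - 2\kappa$ (resp.\ $(2\kappa^3 - 5\kappa^2 + 3\kappa)/2$) voters, and in the tie-free case the $\kappa - c$ fillers contribute $0 + 1 + \dots + (\kappa - c - 1) = (\kappa - c)(\kappa - c - 1)/2$ additional voters, matching the stated bounds; in the consequential-tie-free case the fillers add nothing. I would also verify that prepending fillers preserves the tie-free (resp.\ consequential-tie-free) property at every ballot length: at lengths where some fillers survive on truncated ballots, the vote counts are just the base counts with a few ballots "parked" on fillers, which cannot create a tie among the real candidates because the base construction's gaps are preserved; and the filler-vs-filler and filler-vs-real comparisons are controlled by the chosen counts $0, 1, \dots, \kappa - c - 1$, all strictly below the real candidates' totals.

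The main obstacle I anticipate is the bookkeeping at ballot lengths strictly between $1$ and $\kappa - c$—except that the hypothesis $c < \kappa$ and the explicit requirement "ballots have length at least $\kappa - c$" sidesteps this entirely, so the real work is just carefully confirming the index shift $h \mapsto h + (\kappa - c)$ and that the feasibility constraint transforms correctly (the base construction needs $w_h$ to be a candidate not yet eliminated by step $h+1$ among $\kappa$ candidates, which after the shift and re-labeling into the full $2\kappa$-candidate order becomes $w_h \in \{\kappa - c + h + 1, \dots, 2\kappa\}$). This is a routine but error-prone re-indexing, and is where I would be most careful; everything else is a direct quotation of the earlier constructions.
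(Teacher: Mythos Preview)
Your proposal has a fatal error in the filler placement. You \emph{prepend} the $\kappa-c$ fillers, in a fixed order, to the front of every ballot; but then the first filler receives all of the first-place votes from the base construction's ballots, while every real candidate receives zero. The elimination order therefore removes the real candidates first, not the fillers, and your claim that the $\kappa-c$ filler eliminations come first and redistribute nothing of consequence is false. The paper does the opposite: fillers are \emph{appended} after the last specified entry of each partial ballot, so that no ballot ranks a filler first; the fillers then genuinely have $0$ (or $0,1,\dots,\kappa-c-1$ in the tie-free case) first-place votes and are harmlessly eliminated at the outset, after which each ballot behaves exactly as its original partial version did.

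There is also a structural gap independent of the prepend/append issue. Running the base construction on only $\kappa$ real candidates controls only $\kappa-1$ ballot lengths, and your shift argument contains an arithmetic slip: $(\kappa-c)+(\kappa-1)=2\kappa-c-1$, not $\kappa+c-1$. Moreover, the hypothesis that ballots have length at least $\kappa-c$ constrains the \emph{constructed profile}, not the truncation parameter $h$; the Corollary still demands specified winners at every $h=1,\dots,\kappa+c-1$, so you cannot declare short $h$ ``disallowed.'' The paper's route is to use only $\kappa-c$ appended fillers and let the $c$ freed-up candidates participate as real candidates in the base construction, which is what extends the controllable range from $\kappa-1$ out to $\kappa+c-1$; the fillers serve only to bring short ballots up to the required minimum length.
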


\begin{proof}
  If we can make ballots as short as $\kappa - c$, then we can perform the same construction as above, but using $c$ fewer fillers. Shortening ballots to $\kappa -c$ ensures that ballots will become exhausted when needed, while using fewer than $\kappa - c$ fillers. Those unused fillers are now additional candidates who could be made to win at longer ballot lengths, up to $\kappa + c - 1$. As before, we need to give the fillers $0, \dots, k-c - 1$ first-place votes to avoid ties in the tie-free construction, resulting in the extra $(\kappa-c)(\kappa-c-1)/2$ ballots.  
\end{proof}

Given that explicit full-ballot constructions appear quite challenging, we turn to a computational approach to investigate whether full-ballot profiles can produce $k-1$ truncation winners. Using a linear programming (LP) search, we identified elimination-tie-free profiles with full ballots and $k-1$ truncation winners for $k= 4, 5, 6, 7, 8, 9, 10$ (the sizes of these profiles are shown in \Cref{tab:simplex-constructions}). Moreover, this approach was able to find instances with voter counts matching the exact lower bound in \Cref{thm:no-elim-tie-bound} for $k=5, 6, 7$. Our approach was not able to match the voter lower bound for $k=4$.  For $k\ge 8$, we faced runtime constraints since the number of variables is exponential in the number of candidates, leading us to restrict the search space (described in further detail below). We consider elimination-tie-free profiles since they are easiest to encode as an LP, where we use constraints to enforce unambiguous eliminations.  

\begin{table}[h]
\centering
\caption{LP full-ballot constructions. We used different search strategies for $k\le 7$ and $k \ge 8$, leading to profiles farther from the voter lower bound for $k \ge 8$.}\label{tab:simplex-constructions}
  \begin{tabular}{lrrrr}
\toprule
$k$ & \makecell[r]{\# trunc.\\ winners} &\makecell[r]{ ballot\\ types} & voters & \makecell[r]{voters lower bound\\ (\Cref{thm:no-elim-tie-bound})}\\
\midrule
4 & 3 & 7 & 29 & 26\\
5 & 4 & 12 & 55 & 55\\
6 & 5 & 23 & 99 & 99\\
7 & 6 & 36 & 161 & 161\\
8 & 7 & 57 & 974 & 244\\
9 & 8 & 85 & 1759 & 351\\
10 & 9 & 122 & 4855 & 485\\
\bottomrule
\end{tabular}
\end{table}

The idea behind the search is to construct possible elimination orders across all $h$ that could result in $k-1$ winners, express these as conditions on the sums of counts of every full ballot type in $S_k$ (the set of permutations on $k$ elements), and then use an LP to find a feasible real-valued solution of ballot type counts that result in that elimination order. We round these fractional ballot counts to be integers and check if the resulting profile has the desired elimination order. If not, we can try another possible elimination order or increase the gaps in the constraints so that rounding is less likely to make a solution infeasible. For $k \le 7$, we tested all possible elimination orders, but only tested a single elimination order for $k \ge 8$ due to runtime constraints. 

As an example of constructing elimination orders that result in $k-1$ truncation winners, consider $k=5$. Label the candidates $1, 2, 3, 4, 5$ in order of their full-ballot IRV elimination. If we are to have 4 truncation winners, they must be $2, 3, 4, 5$, and they must win in that order at $h = 1, 2, 3, 4$. By construction, the elimination order at $h=4$ is $1, 2, 3, 4, 5$. However, at $h=3$, the elimination order must be $1, 2, 3, 5, 4$, since 4 wins. At $h=2$, the elimination order can be $1, 2, 4, 5, 3$ or $1, 2, 5, 4, 3$. At $h=1$, it can be one of six options, corresponding to the permutations of $3, 4, 5$: $1, \{3, 4, 5\}, 2$. There are thus 12 possible elimination orders across ballot lengths we to consider that could result in 4 truncation winners. 

Formally, let $x_\pi$ denote the number of ballots with the ranking $\pi\in S_k$ over the candidates. Fix an elimination order over all ballot lengths and let the $(k-1) \times (k-1)$ matrix $E$ store the elimination orders over ballot lengths $1, \dots k-1$. That is, $E_{hi}$ is the index of the candidate eliminated at round $i$ with ballot length $h$. Let $r(h, i)$ denote the set of remaining candidates at round $i$ with ballot length $h$ that are not eliminated at round $i$. Let $b(h, i, j)\subset S_k$ denote the set of ballot types that would be assigned to candidate $j$ at round $i$ with ballot length $h$. Note that $b(h, i, j)$ and $r(h, i)$ depend on the fixed elimination order. Let $C\ge 1$ be the elimination gap (the smallest number of votes by which eliminations are decided).

The linear program for finding $k-1$ truncation winner constructions, minimizing the number of ballots, is then:

\begin{mini*}|l|
  {}{\sum_{\pi \in S_k} x_\pi }{}{}
  \addConstraint{x_\pi}{\ge 0}
  \addConstraint{C + \underbrace{\sum_{\pi \in b(h, i, E_{h i})}x_\pi }_{\text{\# votes for $E_{hi}$}}}{\le \underbrace{\sum_{\pi \in b(h, i, j)} x_\pi}_{\text{\# votes for $j$}}.}
 \end{mini*}
 We have a constraint of the first type for all $\pi \in S_k$ and constraints of the second type for $h = 1, \dots, k-1; \;i = 1, \dots, k-1;\;$ and $j \in r(h, i)$. 
 The second type of constraint encodes each elimination that takes place during IRV at each ballot length, ensuring that the eliminated candidate $E_{hi}$ has fewer votes in that round it is eliminated than each remaining candidate $j \in r(h, i)$. We chose the objective function to find profiles with few ballots. All constructions generated by running the LP are stored in the code and data repository, which contains instructions for viewing them. It remains an open question whether there exist full ballot profiles for every $k$ that result in $k-1$ truncation winners (or any truncation winner sequence). Given the computational evidence from these LPs up to $k=10$ candidates, we conjecture that there are such profiles.

\subsection{Ballot length in simulation}

\begin{figure}[t]
\centering
         \centering
         \includegraphics[width=\columnwidth]{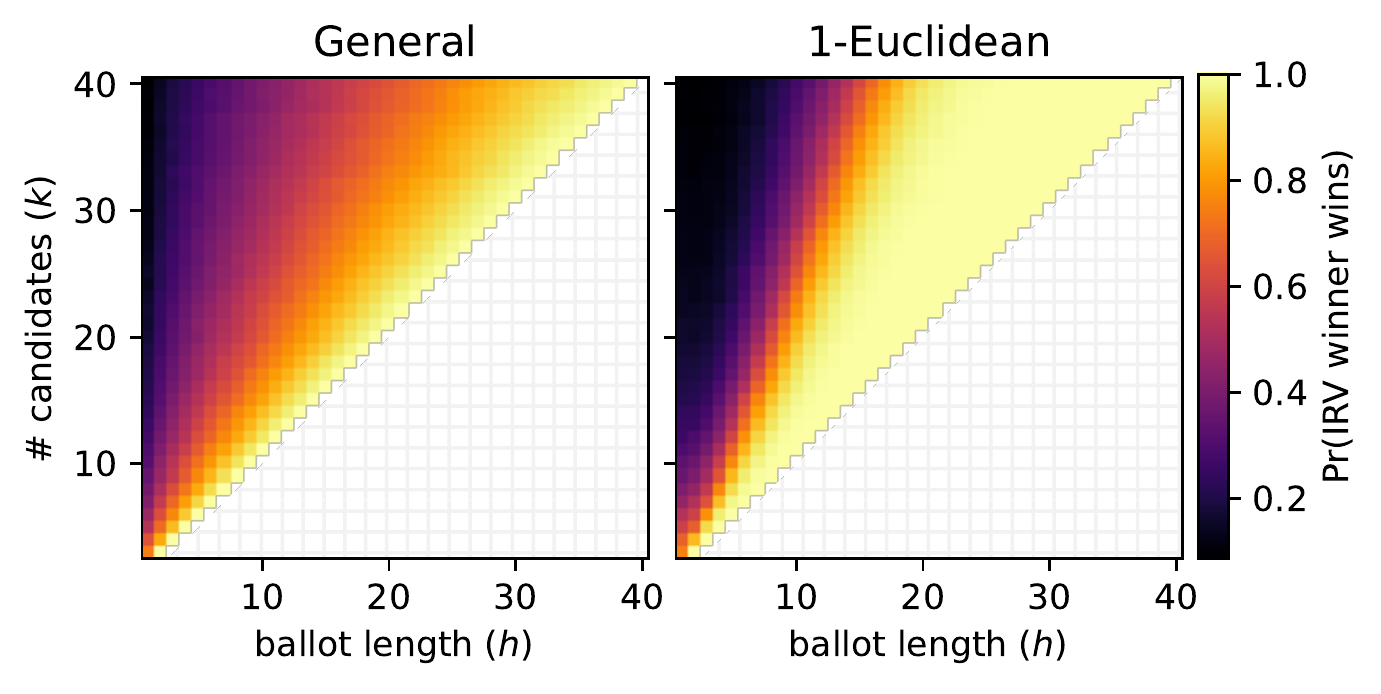}
     
     \caption{Probability that truncated ballots produce the full IRV winner for candidate counts $k=2, \dots, 40$ and ballot lengths $h = 1, \dots, k-1$. (Left) For general ballots (1000, uniform over $S_k$), the probability of producing the IRV winner increases smoothly with the ballot length $h$. (Right) For 1-Euclidean preferences, there is a sharper transition around $h = k/2$. }\label{fig:truncation-heatmaps}
\end{figure}

\begin{figure}[t]
\centering
   \begin{subfigure}[b]{0.8\columnwidth}
         \centering
         \includegraphics[width=\textwidth]{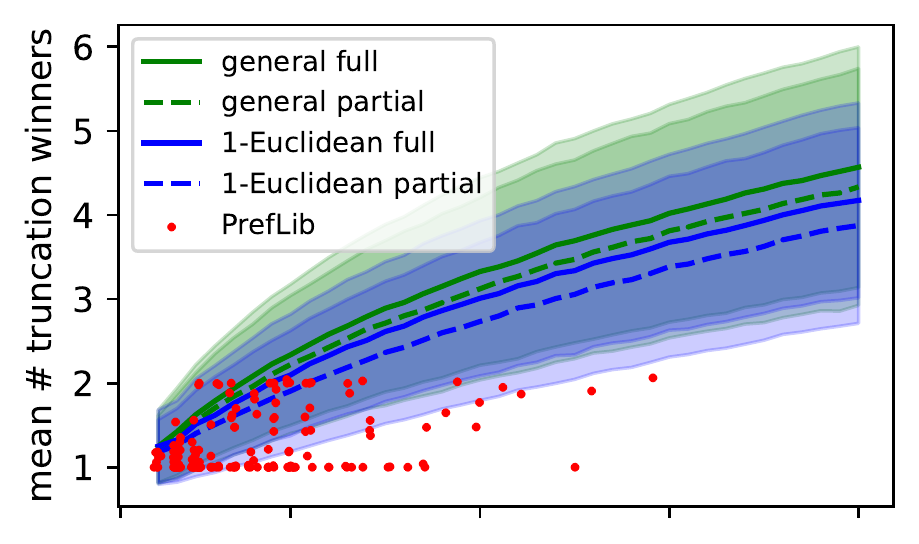}
     \end{subfigure}\\
     \begin{subfigure}[b]{0.8\columnwidth}
         \centering
         \includegraphics[width=\textwidth]{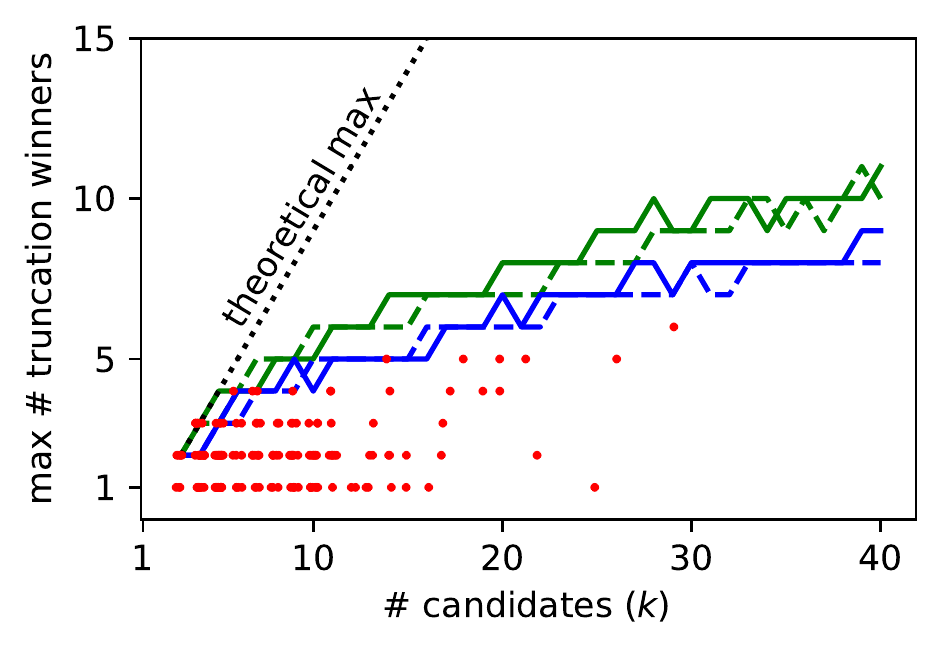}
     \end{subfigure}\\
     
     \caption{Mean (top) and maximum (bottom) number of truncation winners in 10000 synthetic ballot simulations and 10000 PrefLib resampling trials. We simulated uniform general and 1-Euclidean preferences for $k = 3, \dots, 40$.  The shaded regions show standard deviation across trials. To simulate partial ballots, each ballot is voluntarily truncated at a random length between 1 and $k$. While up to $k-1$ truncation winners are possible, the mean number of truncation winners only reaches 4 around $k=40$. 1-Euclidean profiles and profiles with partial ballots tend to produce slightly fewer truncation winners. For the PrefLib data, each point represents a single election, with horizontal jitter added for legibility. Real elections tend to produce even fewer truncation winners, although it is not rare to have more than 1. }
     \label{fig:winner-counts}
\end{figure}

Our theoretical results show that the winner of an IRV election can change dramatically as the ballot length varies. Here, we ask how likely these changes are through simulated profiles. Such simulation analysis was previously conducted for $k=4, 5, 6$~\cite{kilgour2020prevalence}. We extend these simulations up to $k=40$ (our real-world IRV data has examples of elections with up to $\approx 30$ candidates).

We simulate two different types of profiles: general profiles with rankings sampled uniformly at random and 1-Euclidean profiles with voters and candidates embedded in one dimension. For the general profiles, we fix 1000 voters. For 1-Euclidean profiles, we simulate an infinite voter population uniformly distributed over $[0, 1]$, where the number of first-place votes a candidate $i$ has is the size of the interval of $[0, 1]$ containing points closer to $i$ than any other candidate.\footnote{Note that there are at most $\binom{k}{2}+1$ distinct rankings in a 1-Euclidean profile~\cite{coombs1964theory}: the ranking of a voter at $x=0$ lists candidates in left-right order, and as we sweep $x$ to the right, the ranking of a voter at $x$ only changes when we cross one of the $\binom{k}{2}$ midpoints between candidates. We can either compute the regions with these $\binom{k}{2}+1$ distinct rankings or simply iteratively delete the candidate with the fewest first-place votes.}

For both general and 1-Euclidean profiles, we simulate both full and partial ballots to gauge the effect of forced truncation with and without voluntary truncation. For general profiles with partial ballots, we independently and uniformly perform voluntary truncation 
on each voter's preferences before applying forced truncation in the form of ballot length. For 1-Euclidean partial ballots, we do the same with each ballot type.

In \Cref{fig:truncation-heatmaps}, we show the probability that the full-ballot IRV winner is selected with each ballot length $1, \dots, k-1$ for $k=3, \dots, 40$ with initially full ballots (the heatmaps were qualitatively identical for partial ballots; see the Appendix). For general preferences, the probability of selecting the full-ballot IRV winner increases smoothly as ballot length increases. Additionally, for any fixed ballot length, the probability of selecting the IRV winner decreases as the number of candidates increases. For instance, for $h=3$, the probability of selecting the IRV winner first dips below $50\%$ at $k=12$. For 1-Euclidean preferences, small ballot lengths are even less likely to produce the full IRV winner: for $h=3$, the probability first drops below $50\%$ for $k=9$. On the other hand, there is a rapid increase in probability around $h=k/2$. For ballots longer than $k/2$, uniform 1-Euclidean preferences almost always produce the full IRV winner.

In \Cref{fig:winner-counts}, we visualize the same simulation results in a different way. We plot the mean and maximum observed numbers of truncation winners across ballot lengths (the figure also includes PrefLib winner counts described in the next section). While the difference between general and 1-Euclidean profiles was pronounced in the previous heatmaps, they result in almost the same number of truncation winners on average. Additionally, these simulated profiles tend to have a small number of truncation winners relative to the theoretical maximum. On average for $k \le 10$, there are around two truncation winners, while the theoretical maximum is nine. Additionally, the maximum observed number of winners in 10000 simulated trials was well below the theoretical maximum, especially for larger $k$: we only began generating any profiles with 10 truncation winners around $k=40$.

Intuitively, these simulation results therefore indicate that profiles with large numbers of truncation winners are very rare in the space of profiles, at least under these (uniform) measures. However, they do not appear to be significantly rarer among 1-Euclidean profiles than among general profiles, as one might have expected given the increased structure of 1-Euclidean profiles. On the other hand, profiles in which there are more than one winner across ballot lengths are very common. Thus, while truly extreme cases with $k-1$ truncation winners might be rare, cases where ballot length has an effect occur readily in simulation.

\section{Truncating real-world election data}

\begin{table*}[h]
\centering
\caption{Dataset overview.}\label{tab:dataset-summary}
  \begin{tabular}{lrrrrr}
  \toprule
  \bfseries{PrefLib name} & \bfseries{Election locale} & \textbf{\# Elections}  & \textbf{\emph{k}} & \textbf{\emph{h}} & \bfseries{\# Ballots}\\
  \midrule
\texttt{apa} & Am.\ Psych.\ Assoc. & 12 & 5 & 5 & 13318--20239 \\
\texttt{aspen} & Aspen, CO & 2 & 5--11 & 4--9 & 2468--2520 \\
\texttt{berkley} & Berkeley, CA & 1 & 4 & 3 & 4171 \\
\texttt{burlington} & Burlington, VT & 2 & 6 & 5 & 8974--9756 \\
\texttt{debian} & Debian Project & 8 & 4--9 & 4--9 & 143--504 \\
\texttt{ers} & Anon.\ organizations & 87 & 3--29 & 3--29 & 9--3419 \\
\texttt{glasgow} & Glasgow, Scotland & 21 & 8--13 & 8--13 & 5199--12744 \\
\texttt{irish} & Dublin, Ireland & 3 & 9--14 & 9--14 & 29988--64081 \\
\texttt{minneapolis} & Minneapolis, MN & 2 & 7--9 & 3 & 32086--36655 \\
\texttt{oakland} & Oakland, CA & 7 & 4--11 & 3 & 11235--143860 \\
\texttt{pierce} & Pierce County, WA & 4 & 4--7 & 3 & 39974--298438 \\
\texttt{sf} & San Francisco, CA & 14 & 4--25 & 3 & 17675--193854 \\
\texttt{sl} & San Leandro, CA & 3 & 4--7 & 3 & 22360--25316 \\
\texttt{takomapark} & Takoma Park, WA & 1 & 4 & 4 & 202 \\
\texttt{uklabor} & UK Labour Party & 1 & 5 & 5 & 266 \\
\bottomrule
  \end{tabular}
\end{table*}

Given that many truncation winners are theoretically possible, we now ask how often multiple truncation winners occur in real-world election data. To this end, we analyze voter rankings from 168 elections in PrefLib~\cite{mattei2013preflib}. This collection includes 12 American Psychological Association (APA) presidential elections~\cite{regenwetter2007unexpected} ($h=5$), 14 San Francisco local elections ($h=3$), and 21 Glasgow local elections ($h=k$), among others. The number of candidates in these elections ranges from 3--29 and the number of voters from tens to hundreds of thousands (see \Cref{tab:dataset-summary} for an overview). Some of these PrefLib datasets included a small number of ballots with multiple candidates listed at the same rank (0.5\% of all ballots), which we omit.

In order to evaluate the impact of ballot length, we truncate the rankings at each possible shorter ballot length than the election actually used. We then run IRV on the truncated ballots. We assume that if ballots had been shorter, voters would have reported the same ranking, but truncated to the ballot length. It is possible that voters would express their preferences differently depending on the ballot length, so our approach should be seen as an approximation to this counterfactual scenario. 

\begin{figure}[t]
  \centering
         \centering
         \includegraphics[width=\columnwidth]{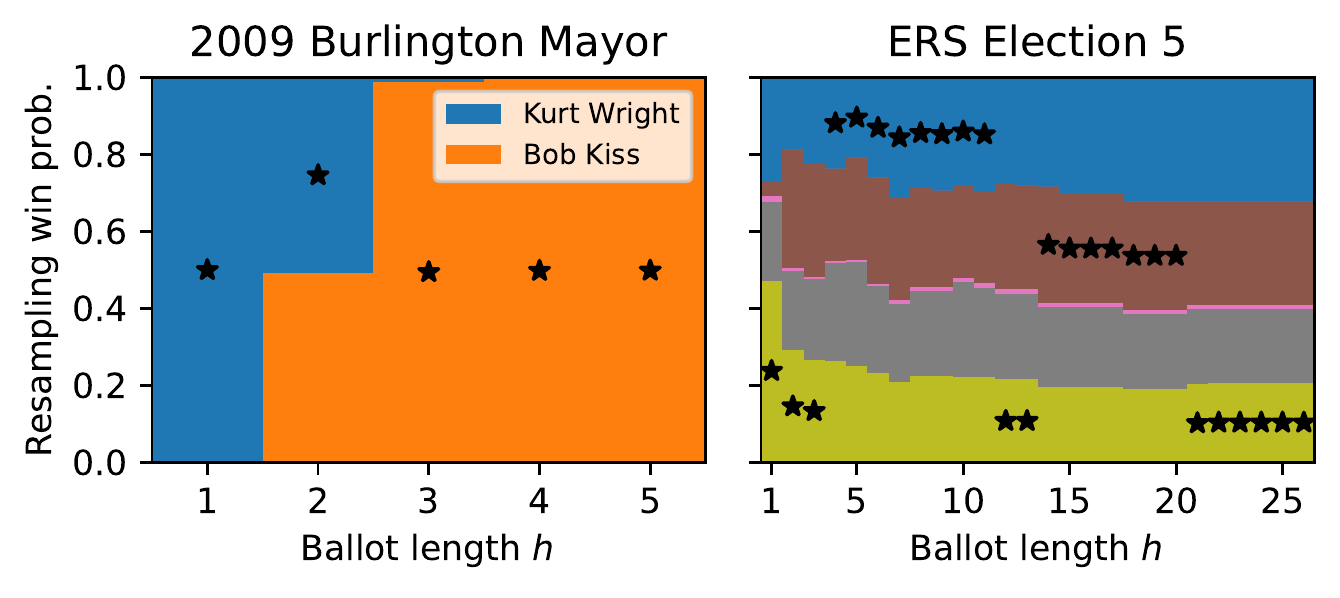}
        \caption{Two elections in the PrefLib data, the infamous 2009 Burlington mayoral election (left, $k=6$, $n=8974$) and an anonymous intra-organization election from the Electoral Reform Society (right, $k=26$, $n=104$). The stacked bars show the probability candidates have of winning at each ballot length under ballot resampling. Stars indicate the winners at each ballot length with actual ballot counts.}
        \label{fig:stacked-bars}
\end{figure}

In 41/168 elections, there were two different winners across ballot lengths, and in one election, there were three different winners. Overall, 25\% of the elections were sensitive to ballot length. Among the elections with ballot length $h \le 5$, $12/85 = 14\%$  of them had two different truncation winners; for elections with $h>5$, $29/83=35\%$ of elections had two or more different winners. In order to better understand the landscape of possible outcomes in each election, we also performed resampling of ballots. Given a collection of $n$ ballots, we resample a collection of $n$ ballots with replacement to simulate another possible election outcome with the same pool of voters. We then truncate those collections of votes to assess the impact of ballot length. In 10000 resampling trials, we observed up to six different truncation winners across the elections, but the expected number of truncation winners under resampling was between one and two for all elections (see \Cref{fig:winner-counts}). In \Cref{fig:stacked-bars}, we also use ballot resampling to visualize the sequence of truncation winners in two PrefLib elections. The 2009 Burlington Mayoral election famously had a different plurality winner (Kurt Wright) than the elected IRV winner (Bob Kiss), but our visualization reveals that at ballot length $h=2$, the election was a complete toss-up and could have gone either way with only a small change in ballot counts. In the right subplot, we visualize the sequence of truncation winners in the one PrefLib election that had three distinct truncation winners. Not only does this election have three truncation winners, but the sequence of winners flips back and forth, as we proved theoretically possible.\footnote{In addition, note that the modal  winner under resampling need not be the actual IRV winner, as we observe in ERS Election 5. A simple example demonstrating this phenomenon is the profile with $k$ candidates and $n$ ballots $(A)$, $n+2$ ballots $(B, A)$, and $n+1$ ballots $(x, A)$ for all other candidates $x$, with $n$ large. The IRV winner is $B$, but $A$ is more likely to win under resampling. For $k=3$, $A$ wins with probability $2/3$ under resampling; as the number of candidates grows, $A$ almost surely wins. This phenomenon is in contrast to plurality, where the actual winner must be the modal resampling winner.}

The smaller number of truncation winners in real data is likely due to the small number of front-runners in real-world elections, in contrast with the uniform preferences in our synthetic data. Our observations here are in line with the finding that ballot truncation is less likely to change the winner in the Mallows model when preferences are more tightly clustered around the central ranking~\cite{ayadi2019single}.

\section{Discussion}

Our theoretical results are fairly pessimistic: IRV election outcomes can change dramatically with ballot length. 
Our analysis of real and simulated data, on the other hand, presents a more mixed picture: ballot length regularly has an effect on the identity of the winner even in real elections, but the extreme changes between winners that are theoretically possible rarely occur, which may be cause for some degree of optimism. Nonetheless, changes in ballot length by truncation can often result in two or three different winners, even when the ballot length is short. 

There are a number of open theoretical questions around ballot length. First, is it possible to achieve every feasible truncation winner sequence with complete ballots? We suspect the answer is yes, but an explicit construction has proved elusive. Second, are more than $O(\sqrt{k})$ truncation winners possible for single-peaked ballots? How many truncation winners are possible with single-crossing ballots? Similar questions could be asked for other profile restrictions, such as 1-Euclidean preferences. 

Our interest in IRV is due to its increasing popularity of IRV in United States local elections, but one could also investigate the effects of ballot length in other ranking-based voting systems such as Borda count or Copeland's method. Additionally, we do not address
what ballot length should be used in practice, which requires making a tradeoff between competing desires. Finally, it would be interesting to understand when elections are close enough for ballot length to affect the winner. There has been research on calculating the margin of victory for IRV~\cite{sarwate2013risk,blom2016efficient,magrino2011computing}, defined as the number of votes which would need to be altered to change the winner, which is NP-hard to compute~\cite{xia2012computing}. A notion of margin of victory that relates to winners across different ballot lengths would be valuable.

\section{Ackowledgments}
This work was supported in part by ARO MURI, a Simons Investigator Award, a Simons Collaboration grant, a grant from the MacArthur Foundation, the Koret Foundation, and NSF CAREER Award \#2143176. We thank the anonymous reviewers for their helpful feedback.

\bibliography{references}

\begin{thebibliography}{39}
\providecommand{\natexlab}[1]{#1}

\bibitem[{Abdulkadiroglu et~al.(2006)Abdulkadiroglu, Pathak, Roth, and
  Sonmez}]{abdulkadiroglu2006changing}
Abdulkadiroglu, A.; Pathak, P.~A.; Roth, A.~E.; and Sonmez, T. 2006.
\newblock Changing the Boston School Choice Mechanism.
\newblock \emph{NBER Working Paper}, (w11965).

\bibitem[{Arrow(1951)}]{arrow1951social}
Arrow, K.~J. 1951.
\newblock \emph{Social Choice and Individual Values}.

\bibitem[{Ayadi et~al.(2019)Ayadi, Amor, Lang, and Peters}]{ayadi2019single}
Ayadi, M.; Amor, N.; Lang, J.; and Peters, D. 2019.
\newblock Single transferable vote: Incomplete knowledge and communication
  issues.
\newblock In \emph{AAMAS}.

\bibitem[{Baumeister et~al.(2012)Baumeister, Faliszewski, Lang, and
  Rothe}]{baumeister2012campaigns}
Baumeister, D.; Faliszewski, P.; Lang, J.; and Rothe, J. 2012.
\newblock Campaigns for lazy voters: truncated ballots.
\newblock In \emph{AAMAS}, 577--584.

\bibitem[{Black(1948)}]{black1948rationale}
Black, D. 1948.
\newblock On the rationale of group decision-making.
\newblock \emph{Journal of Political Economy}, 56(1): 23--34.

\bibitem[{Blom et~al.(2016)Blom, Teague, Stuckey, and
  Tidhar}]{blom2016efficient}
Blom, M.; Teague, V.; Stuckey, P.~J.; and Tidhar, R. 2016.
\newblock Efficient Computation of Exact {IRV} Margins.
\newblock In \emph{ECAI}, 480--488.

\bibitem[{Burnett and Kogan(2015)}]{burnett2015ballot}
Burnett, C.~M.; and Kogan, V. 2015.
\newblock Ballot (and voter) ``exhaustion'' under Instant Runoff Voting: An
  examination of four ranked-choice elections.
\newblock \emph{Electoral Studies}, 37: 41--49.

\bibitem[{Chevaleyre et~al.(2010)Chevaleyre, Lang, Maudet, and
  Monnot}]{chevaleyre2010possible}
Chevaleyre, Y.; Lang, J.; Maudet, N.; and Monnot, J. 2010.
\newblock Possible winners when new candidates are added: The case of scoring
  rules.
\newblock In \emph{AAAI}.

\bibitem[{Coll(2021)}]{coll2021demographic}
Coll, J.~A. 2021.
\newblock Demographic disparities using ranked-choice voting? {R}anking
  difficulty, under-voting, and the 2020 {D}emocratic primary.
\newblock \emph{Politics and Governance}, 9(2): 293--305.

\bibitem[{Conitzer, Rognlie, and Xia(2009)}]{conitzer2009preference}
Conitzer, V.; Rognlie, M.; and Xia, L. 2009.
\newblock Preference functions that score rankings and maximum likelihood
  estimation.
\newblock In \emph{IJCAI}.

\bibitem[{Coombs(1964)}]{coombs1964theory}
Coombs, C.~H. 1964.
\newblock \emph{A theory of data.}
\newblock Wiley.

\bibitem[{Elkind, Faliszewski, and Skowron(2014)}]{elkind2014characterization}
Elkind, E.; Faliszewski, P.; and Skowron, P. 2014.
\newblock A characterization of the single-peaked single-crossing domain.
\newblock In \emph{AAAI}.

\bibitem[{Elkind, Lackner, and Peters(2022)}]{elkind2022restrictions}
Elkind, E.; Lackner, M.; and Peters, D. 2022.
\newblock Preference Restrictions in Computational Social Choice: A Survey.
\newblock \emph{arXiv:2205.09092}.

\bibitem[{FairVote(2022)}]{fairvote}
FairVote. 2022.
\newblock Details About Ranked Choice Voting.
\newblock \url{https://www.fairvote.org/rcv}.

\bibitem[{Fishburn and Brams(1984)}]{fishburn1984manipulability}
Fishburn, P.~C.; and Brams, S.~J. 1984.
\newblock Manipulability of voting by sincere truncation of preferences.
\newblock \emph{Public Choice}, 44(3): 397--410.

\bibitem[{{Florida Legislature}(2022)}]{floridaban}
{Florida Legislature}. 2022.
\newblock {State of Florida Chapter No. 2022-73, Senate Bill No.\ 524}.
\newblock \url{http://laws.flrules.org/2022/73}.

\bibitem[{Gans and Smart(1996)}]{gans1996majority}
Gans, J.~S.; and Smart, M. 1996.
\newblock Majority voting with single-crossing preferences.
\newblock \emph{Journal of Public Economics}, 59(2): 219--237.

\bibitem[{Garg et~al.(2019)Garg, Gelauff, Sakshuwong, and Goel}]{garg2019your}
Garg, N.; Gelauff, L.~L.; Sakshuwong, S.; and Goel, A. 2019.
\newblock Who is in your top three? Optimizing learning in elections with many
  candidates.
\newblock In \emph{Proceedings of the AAAI Conference on Human Computation and
  Crowdsourcing}, volume~7, 22--31.

\bibitem[{Hoffman et~al.(2021)Hoffman, Kauba, Reidy, and
  Weighill}]{hoffman2021proportionality}
Hoffman, C.; Kauba, J.; Reidy, J.; and Weighill, T. 2021.
\newblock Proportionality in multi-winner {RCV} elections: A simulation study
  with ballot truncation.
\newblock \emph{Available at SSRN 3942892}.

\bibitem[{Kamwa(2022)}]{kamwa2022scoring}
Kamwa, E. 2022.
\newblock Scoring rules, ballot truncation, and the truncation paradox.
\newblock \emph{Public Choice}.

\bibitem[{Kilgour, Gr{\'e}goire, and Foley(2020)}]{kilgour2020prevalence}
Kilgour, D.~M.; Gr{\'e}goire, J.-C.; and Foley, A.~M. 2020.
\newblock The prevalence and consequences of ballot truncation in ranked-choice
  elections.
\newblock \emph{Public Choice}, 184(1): 197--218.

\bibitem[{Konczak and Lang(2005)}]{konczak2005voting}
Konczak, K.; and Lang, J. 2005.
\newblock Voting procedures with incomplete preferences.
\newblock In \emph{IJCAI Multidisciplinary Workshop on Advances in Preference
  Handling}, volume~20.

\bibitem[{Langan(2004)}]{langan2004instant}
Langan, J.~P. 2004.
\newblock Instant runoff voting: A cure that is likely worse than the disease.
\newblock \emph{Wm. \& Mary L. Rev.}, 46: 1569.

\bibitem[{Lewyn(2012)}]{lewyn2012two}
Lewyn, M.~E. 2012.
\newblock Two cheers for instant runoff voting.
\newblock \emph{Phoenix L. Rev.}, 6: 117.

\bibitem[{Magrino et~al.(2011)Magrino, Rivest, Shen, and
  Wagner}]{magrino2011computing}
Magrino, T.~R.; Rivest, R.~L.; Shen, E.; and Wagner, D. 2011.
\newblock Computing the Margin of Victory in {IRV} Elections.
\newblock In \emph{Electronic Voting Technology Workshop/Workshop on
  Trustworthy Elections (EVT/WOTE)}.

\bibitem[{Mattei and Walsh(2013)}]{mattei2013preflib}
Mattei, N.; and Walsh, T. 2013.
\newblock {PrefLib}: A library for preferences.
\newblock In \emph{ADT}, 259--270. Springer.

\bibitem[{Menon and Larson(2017)}]{menon2017computational}
Menon, V.; and Larson, K. 2017.
\newblock Computational aspects of strategic behaviour in elections with
  top-truncated ballots.
\newblock \emph{Autonomous Agents and Multi-Agent Systems}, 31(6): 1506--1547.

\bibitem[{Narodytska and Walsh(2014)}]{narodytska2014computational}
Narodytska, N.; and Walsh, T. 2014.
\newblock The Computational Impact of Partial Votes on Strategic Voting.
\newblock In \emph{ECAI}, 657--662.

\bibitem[{Neely and Cook(2008)}]{neely2008whose}
Neely, F.; and Cook, C. 2008.
\newblock Whose votes count? Undervotes, overvotes, and ranking in San
  Francisco's instant-runoff elections.
\newblock \emph{American Politics Research}, 36(4): 530--554.

\bibitem[{Neely and McDaniel(2015)}]{neely2015overvoting}
Neely, F.; and McDaniel, J. 2015.
\newblock Overvoting and the equality of voice under instant-runoff voting in
  San Francisco.
\newblock \emph{California Journal of Politics and Policy}, 7(4).

\bibitem[{Regenwetter et~al.(2007)Regenwetter, Kim, Kantor, and
  Ho}]{regenwetter2007unexpected}
Regenwetter, M.; Kim, A.; Kantor, A.; and Ho, M.-H.~R. 2007.
\newblock The unexpected empirical consensus among consensus methods.
\newblock \emph{Psychological Science}, 18(7): 629--635.

\bibitem[{Saari and Van~Newenhizen(1988)}]{saari1988problem}
Saari, D.~G.; and Van~Newenhizen, J. 1988.
\newblock The problem of indeterminacy in approval, multiple, and truncated
  voting systems.
\newblock \emph{Public Choice}, 59(2): 101--120.

\bibitem[{Saltsman and Paxton(2021)}]{wsjmess}
Saltsman, M.; and Paxton, R. 2021.
\newblock Start Spreading the News---Ranked-Choice Voting Is a Mess.
\newblock \emph{The Wall Street Journal}.

\bibitem[{Sarwate, Checkoway, and Shacham(2013)}]{sarwate2013risk}
Sarwate, A.~D.; Checkoway, S.; and Shacham, H. 2013.
\newblock Risk-limiting audits and the margin of victory in nonplurality
  elections.
\newblock \emph{Statistics, Politics, and Policy}, 4(1): 29--64.

\bibitem[{Swensen(2021)}]{swensenquote}
Swensen, S. 2021.
\newblock Minutes of the Political Subdivisions Interim Committee.
\newblock
  \url{https://le.utah.gov/MtgMinutes/publicMeetingMinutes.jsp?Com=INTPOL&meetingId=17750},
  2:06:27.

\bibitem[{{Tennessee Legislature}(2022)}]{tennesseeban}
{Tennessee Legislature}. 2022.
\newblock {State of Tennessee Public Chapter No. 621, Senate Bill No.\ 1820}.
\newblock \url{https://publications.tnsosfiles.com/acts/112/pub/pc0621.pdf}.

\bibitem[{{US Congress}(2019)}]{rcvact}
{US Congress}. 2019.
\newblock {H.R.4464 - Ranked Choice Voting Act}.
\newblock \url{https://www.congress.gov/bill/116th-congress/house-bill/4464}.

\bibitem[{Xia(2012)}]{xia2012computing}
Xia, L. 2012.
\newblock Computing the margin of victory for various voting rules.
\newblock In \emph{Proceedings of the 13th ACM Conference on Electronic
  Commerce}, 982--999.

\bibitem[{Xia and Conitzer(2011)}]{xia2011determining}
Xia, L.; and Conitzer, V. 2011.
\newblock Determining possible and necessary winners given partial orders.
\newblock \emph{Journal of Artificial Intelligence Research}, 41: 25--67.

\end{thebibliography}

\newpage
\appendix

\section {The IRV Algorithm}
For each voter $j$, let $\pi_j^{(\ell)}$ be their ballot
after step $\ell$ of IRV, with $\pi_j^{(0)} = \pi_j$. 
Let $\pi_j^{(\ell)}(h)$ denote the candidate ranked in position
$h$ by this ballot, with lower indices $h$ corresponding to more 
preferred positions.
A ballot $\pi_j^{(\ell)}$ at step $\ell$ is said to be a \emph{vote} for
candidate $i$ if $\pi_{j}^{(\ell)}(1) = i$. See \Cref{alg:irv} for a formal
definition of the IRV algorithm for determining a winner given a profile
$\{\pi_1, \dots, \pi_n\}$.

\begin{algorithm}[h]
   \caption{Instant runoff voting.}
   \label{alg:irv}
\begin{algorithmic}[1]
    \State {\bfseries Input:} candidates $1, \dots, k$, partial rankings $\pi_j$ over the candidates for each voter $j$
    \State $\pi_j^{(0)} \gets \pi_j, \forall j$
    \State $C = \{i\in \{1, \dots, k\} \mid \exists j: i \in \pi_j\}$\Comment{Non-eliminated candidates}
    \State $\ell \gets 0$
    \While{$|C| > 1$}
        \State $B = \{j \mid |\pi_j^{(\ell)}| > 0\}$\Comment{Non-exhausted ballots}
        \State $i^* \gets \argmin_i \sum_{j \in B} \boldsymbol{1}\left[\pi_j^{(\ell)}(1) = i\right]$\Comment{Break ties as desired}
        \State $\ell \gets \ell + 1$
        \State $C \gets C \setminus \{i^*\}$
        \State $\pi_j^{(\ell)} \gets \pi_j^{(\ell-1)}\setminus \{i^*\}, \forall j$
    \EndWhile
    \State \Return the winner, the last remaining candidate in $C$
\end{algorithmic}
\end{algorithm}


%

\section{Additional figures}\label{app:plots}

\Cref{fig:k-and-h} visualizes the distributions of $k$, $h$, and $n$ in the PrefLib data.

In \Cref{fig:truncation-partial-heatmaps}, we show the versions of the heatmaps in \Cref{fig:truncation-heatmaps} with partial rather than full preferences. For general preferences, we shorted each of the 1000 voters preferences to a length uniform over $1, \dots, k$. For 1-Euclidean voters, we uniformly shorted the preferences of each of the $\binom{k}{2} + 1$ voter types.

\section{Experiment details}
Experiments were run on a server with 144 Intel Xeon Gold 6254 CPUs and 1.5TB RAM running Ubuntu 20.04.4 LTS (Focal Fossa). All libraries used are documented in the code README, as well as detailed instructions for reproducing all experiments.

\begin{figure*}[h]
\centering
  \includegraphics[width=0.9\textwidth]{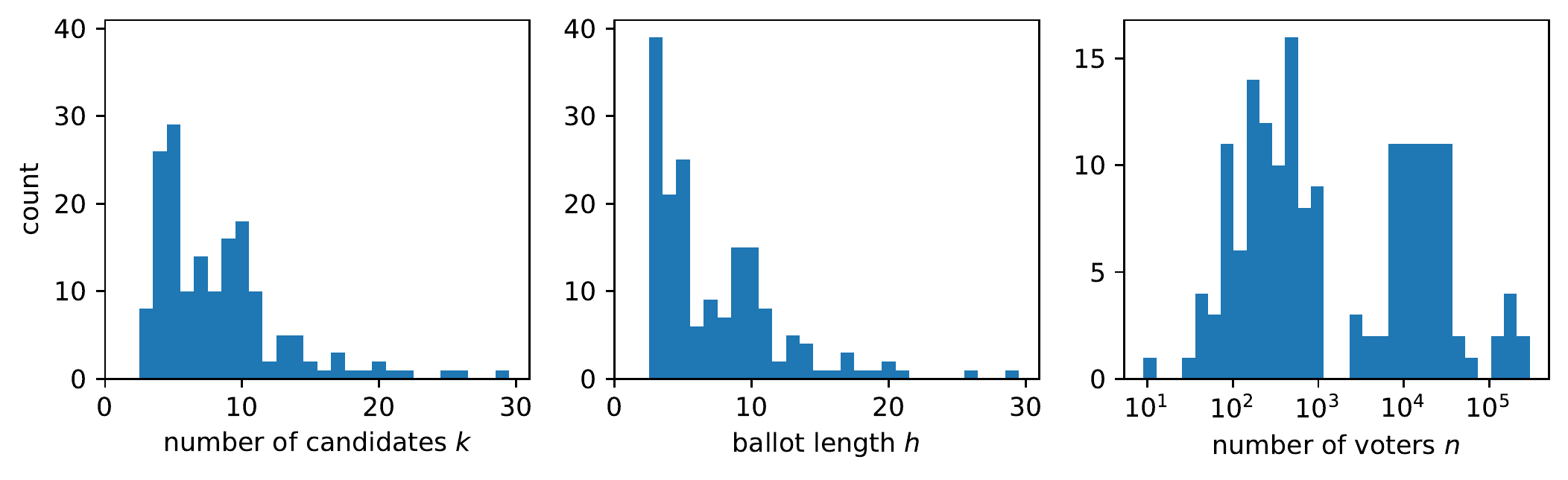}
  \caption{Distributions of candidate counts, ballot lengths, and voter counts in the PrefLib election datasets. }\label{fig:k-and-h}
\end{figure*}

\begin{figure*}[h]
\centering
   \begin{subfigure}[b]{0.4\textwidth}
         \centering
         \includegraphics[width=\textwidth]{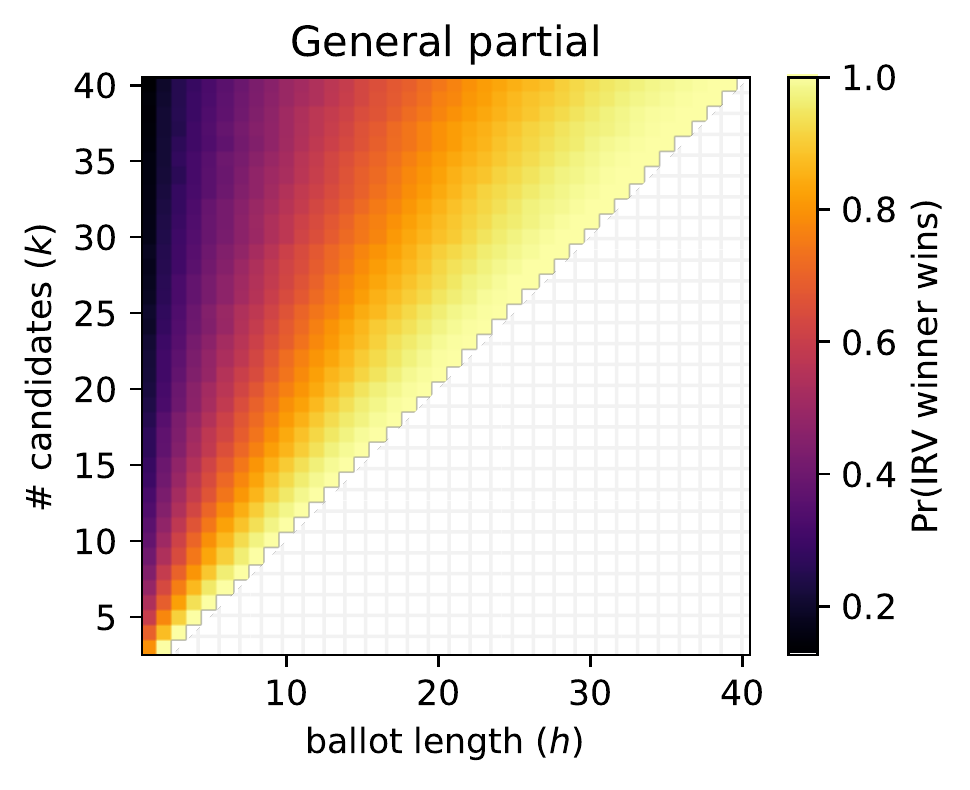}
         \caption{General partial preferences}
     \end{subfigure}
     \qquad
     \begin{subfigure}[b]{0.4\textwidth}
         \centering
         \includegraphics[width=\textwidth]{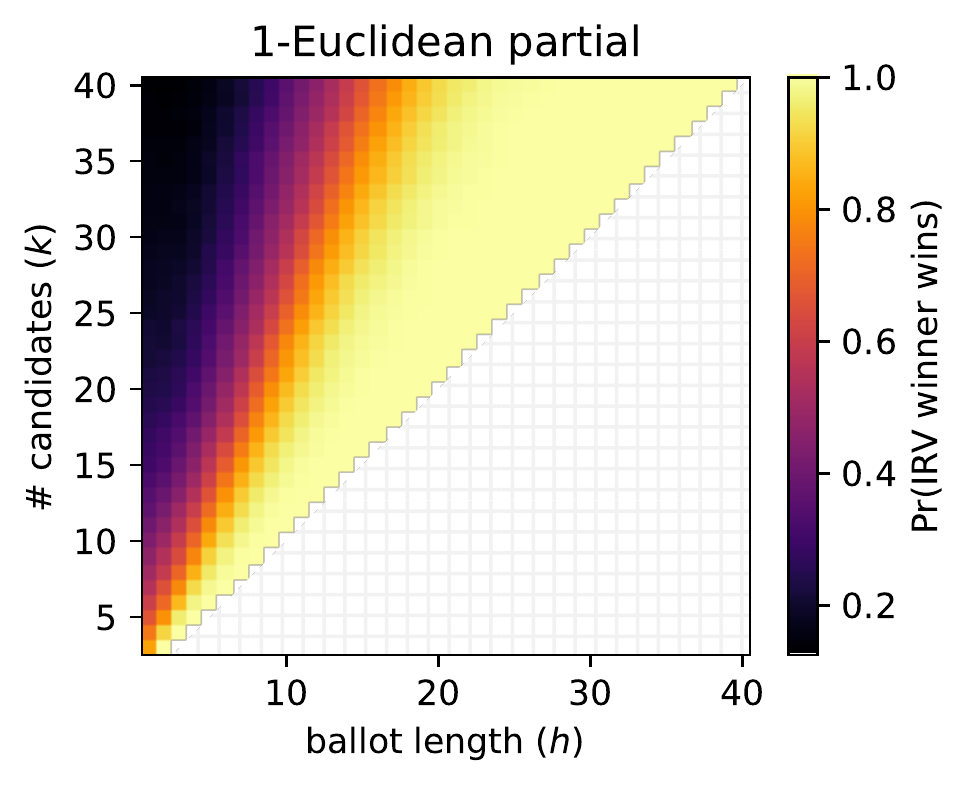}
         \caption{1-Euclidean parital preferences}
     \end{subfigure}\\
     
     \caption{Probability that truncated ballots produce the full IRV winner for candidate counts $k=2, \dots, 40$ and ballot lengths $h = 1, \dots, k-1$ with partial preferences (each voter's preferences are shorted uniformly at random). The results are qualitatively the same as in \Cref{fig:truncation-heatmaps}. }\label{fig:truncation-partial-heatmaps}
\end{figure*}

\end{document}